\newtheorem{theorem}{Theorem}
\newtheorem{corollary}[theorem]{Corollary}
\newtheorem{lemma}[theorem]{Lemma}
\newenvironment{proof}{ \textbf{Proof:} }{ \hfill $\Box$}
\newcommand{\figref}[1]{{Fig.}~\ref{#1}}
\def\bb0{{\mathbb{0}}}
\def\ba{{\mathbf{a}}}
\def\bb{{\mathbf{b}}}
\def\bh{{\mathbf{h}}}
\def\bm{{\mathbf{m}}}
\def\bs{{\mathbf{s}}}
\def\bu{{\mathbf{u}}}
\def\bw{{\mathbf{w}}}
\def\b0{{\mathbf{0}}}
\def\bA{{\mathbf{A}}}
\def\bB{{\mathbf{B}}}
\def\bF{{\mathbf{F}}}
\def\bG{{\mathbf{G}}}
\def\bH{{\mathbf{H}}}
\def\bI{{\mathbf{I}}}
\def\bR{{\mathbf{R}}}
\def\bU{{\mathbf{U}}}
\def\bW{{\mathbf{W}}}
\def\bbE{{\mathbb{E}}}
\def\cA{\mathcal{A}}
\def\cK{\mathcal{K}}
\def\cN{\mathcal{N}}
\def\sf0{{\mathsf{0}}}
\newcommand{\sref}[1]{{Section}~\ref{#1}}
\newcommand{\pinv}[1]{\ensuremath{#1^{\dagger}}} 	
\def \bUpsilon {\boldsymbol{\Upsilon}}
\def \bUN {\bU^\mathrm{NI}}
\def \rm {\mathrm}
\def \rA {r_\mathrm{A}}
\def \rE {r_\mathrm{E}}
\def \rN {r_\mathrm{NI}}
\begin{document}
\title{ Multi-Layer Precoding: A Potential Solution for Full-Dimensional Massive MIMO Systems}
\author{Ahmed Alkhateeb, Geert Leus, and Robert W. Heath, Jr. \thanks{Ahmed Alkhateeb and Robert W. Heath Jr. are with The University of Texas at Austin (Email: aalkhateeb, rheath@utexas.edu). Geert Leus is with Delft University of Technology (Email: g.j.t.leus@tudelft.nl).} \thanks{This material is based upon work supported in part by the National Science Foundation under Grant No. NSF-CCF-1319556.}\thanks{A shorter version of this paper was presented at Asilomar Conference on Signals, Systems, and Computers, Nov., 2014 \cite{Alkhateeb2014c}.}}
\maketitle

\begin{abstract}
Massive multiple-input multiple-output (MIMO) systems achieve high sum spectral efficiency by offering an order of magnitude increase in multiplexing gains. In time division duplexing systems, however, the reuse of uplink training pilots among cells results in additional channel estimation error, which causes downlink inter-cell interference, even when large numbers of antennas are employed. Handling this interference with conventional network MIMO techniques is challenging due to the large channel dimensionality. Further, the implementation of large antenna precoding/combining matrices is associated with high hardware complexity and power consumption. In this paper, we propose multi-layer precoding to enable efficient and low complexity operation in full-dimensional massive MIMO, where a large number of antennas is used in two dimensions. In multi-layer precoding, the precoding matrix of each base station is written as a product of a number of precoding matrices, each one called a layer. Multi-layer precoding (i) leverages the directional characteristics of large-scale MIMO channels to manage inter-cell interference with low channel knowledge requirements, and (ii) allows for an efficient implementation using low-complexity hybrid analog/digital architectures. We present a specific multi-layer precoding design for full-dimensional massive MIMO systems. The performance of this precoding design is analyzed and the per-user achievable rate is characterized for general channel models. The asymptotic optimality of the proposed multi-layer precoding design is then proved for some special yet important channels. Numerical simulations  verify the analytical results and illustrate the potential gains of multi-layer precoding compared to traditional pilot-contaminated massive MIMO solutions.
\end{abstract}

\section{Introduction} \label{sec:Intro}
 
Massive MIMO promises significant spectral efficiency gains for cellular systems. Scaling up the number of antennas, however, faces several challenges that prevent the corresponding scaling of the gains \cite{Larsson2014,Rusek2013,HeathJr2016,Truong2013}. First, the training and feedback of the large channels has high overhead in   frequency division duplexing (FDD) systems. To overcome that, channel reciprocity in conjunction with time division duplexing (TDD) systems is used \cite{Marzetta2010,Bjoernson2016}. Reusing the uplink training pilots among cells, however, causes channel estimation errors which in turn lead to downlink inter-cell interference, especially for cell-edge users \cite{Marzetta2010}. Managing this inter-cell interference using traditional network MIMO techniques requires high coordination overhead, which could limit the overall system performance \cite{Lozano2013}. Another challenge with the large number of antennas lies in the hardware implementation \cite{HeathJr2016,Singh2009}. Traditional MIMO precoding techniques normally assumes complete baseband processing, which requires dedicating an RF chain per antenna. This may lead to high cost and power consumption in massive MIMO systems \cite{HeathJr2016}. Therefore, developing precoding techniques that can overcome the challenges of inter-cell interference and complete baseband processing is of great interest.

\subsection{Prior Work}

Inter-cell interference is a critical problem for general MIMO systems. Typical solutions for managing this interference require some sort of collaboration between the base stations (BSs) \cite{Gesbert2010}. The overhead of this cooperation, though, can limit the system performance \cite{Lozano2013}. When the number of antennas grows to infinity, the performance of the network becomes limited by pilot contamination \cite{Marzetta2010}, which is one form of inter-cell interference. Pilot contamination happens because of  the channel estimation errors that result from reusing the uplink training pilots among users in TDD massive MIMO systems. Several solutions have been proposed to manage inter-cell interference in massive MIMO systems \cite{Huh2012,Jose2011,Ashikhmin2012,Yin2013}. In \cite{Huh2012,Jose2011}, multi-cell zero-forcing and MMSE MIMO precoding strategies were developed to cancel or reduce the inter-cell interference. The solutions in \cite{Huh2012,Jose2011}, however, require global channel knowledge at every BS, which makes them feasible only for small numbers of antennas \cite{Lu2014}. Pilot contamination precoding was proposed in \cite{Ashikhmin2012} to overcome the pilot contamination problem, relying on the channel covariance knowledge. The technique in \cite{Ashikhmin2012}, though, requires sharing the transmitted messages between all BSs, which is difficult to achieve in practice. In \cite{Yin2013}, the directional characteristics of large-dimensional channels were leveraged to improve the uplink channel training in TDD systems. This solution, however, requires fully-digital hardware and does not leverage the higher degrees of freedom provided in full-dimensional massive MIMO systems.

Precoding approaches that divide the processing between two stages have been developed in \cite{ElAyach2014,Alkhateeb2014b,Bogale2014,Liang2014,Adhikary2013} for mmWave and massive MIMO systems. Motivated by the high cost and power consumption of the RF, \cite{ElAyach2014} developed hybrid analog/digital precoding algorithms for mmWave systems. Hybrid precoding divides the precoding between RF and baseband domains, and requires a much smaller number of RF chains compared to the number of antennas. For multi-user systems \cite{Alkhateeb2014b} proposed a two-stage hybrid precoding design where the first precoding matrix is designed to maximize the signal power for each user and the second matrix is designed to manage the multi-user interference. Similar solutions were also developed for massive MIMO systems \cite{Bogale2014,Liang2014}, with the general objective of maximizing the system sum-rate. In \cite{Adhikary2013}, a two-stage joint spatial division and multiplexing (JSDM) precoding scheme was developed to reduce the channel training overhead in FDD massive MIMO systems. In JSDM, the base station (BS) divides the mobile stations (MSs) into groups of approximately similar covariance eigenspaces, and designs a pre-beamforming matrix based on the large channel statistics. The interference between the users of each group is then managed using another precoding matrix given the effective reduced-dimension channels. The work in \cite{ElAyach2014,Alkhateeb2014b,Bogale2014,Liang2014,Adhikary2013}, however, did not consider out-of-cell interference, which ultimately limits the performance of massive MIMO systems.

\subsection{Contribution} 
In this paper, we introduce a general framework, called multi-layer precoding, that (i) coordinates inter-cell interference in full-dimensional massive MIMO systems leveraging large channel characteristics and (ii) allows for efficient implementations using hybrid analog/digital architectures. Note that most of the literature on full-dimensional MIMO systems did not assume massive MIMO \cite{Nam2013,Kim2014a,Seifi2014a}, and the two systems were studied independently using different tools and theories. In this paper, we refer to full-dimensional massive MIMO as a two-dimensional MIMO system, which adopts large numbers of antennas in the two dimensions. The main contributions of our work are summarized as follows.
\begin{itemize}
	\item Designing a specific multi-layer precoding solution for full-dimensional massive MIMO systems. The proposed precoding strategy decouples the precoding matrix of each BS as a multiplication of three precoding matrices, called layers. The three precoding layers are designed to avoid inter-cell interference, maximize effective signal power, and manage intra-cell multi-user interference, with low channel training overhead. 
	\item Analyzing the performance of the proposed multi-layer precoding design. First, the per-user achievable rate using multi-layer precoding is derived for a general channel model. Then, asymptotic optimality results for the achievable rates with multi-layer precoding are derived for two special channel models: the one-ring and the single-path models. Lower bounds on the achievable rates for the cell-edge users are also characterized under the one-ring channel model.
\end{itemize}
The developed multi-layer precoding solutions are also evaluated by numerical simulations. Results show the multi-layer precoding can approach the single-user rate, which is free of inter-cell and intra-cell interference, in some special cases. Further, results illustrate that significant rate and coverage gains can be obtained by multi-layer precoding  compared to conventional conjugate beamforming and zero-forcing massive MIMO solutions.

We use the following notation throughout this paper: $\bA$ is a matrix, $\ba$ is a vector, $a$ is a scalar, and $\cA$ is a set. $|\bA|$ is the determinant of $\bA$, $\|\bA \|_F$ is its Frobenius norm, whereas $\bA^T$, $\bA^H$, $\bA^*$, $\bA^{-1}$, $\pinv{\bA}$ are its transpose, Hermitian (conjugate transpose), conjugate, inverse, and pseudo-inverse respectively. $[\bA]_{r,:}$ and $[\bA]_{:,c}$ are the $r$th row and $c$th column of the matrix $\bA$, respectively. $\mathrm{diag}(\ba)$ is a diagonal matrix with the entries of $\ba$ on its diagonal. $\bI$ is the identity matrix and $\mathbf{1}_{N}$ is the $N$-dimensional all-ones vector. $\bA \otimes \bB$ is the Kronecker product of $\bA$ and $\bB$, and $\bA \circ \bB$ is their Khatri-Rao product. $\cN(\bm,\bR)$ is a complex Gaussian random vector with mean $\bm$ and covariance $\bR$. $\bbE\left[\cdot\right]$ is used to denote expectation.

\section{System and Channel Models} \label{sec:Model}

In this section, we present the full-dimensional massive MIMO system and channel models adopted in the paper.

\subsection{System Model} \label{sec:SysModel}
\begin{figure}[t]
	\centerline{
		\includegraphics[width=.95\columnwidth]{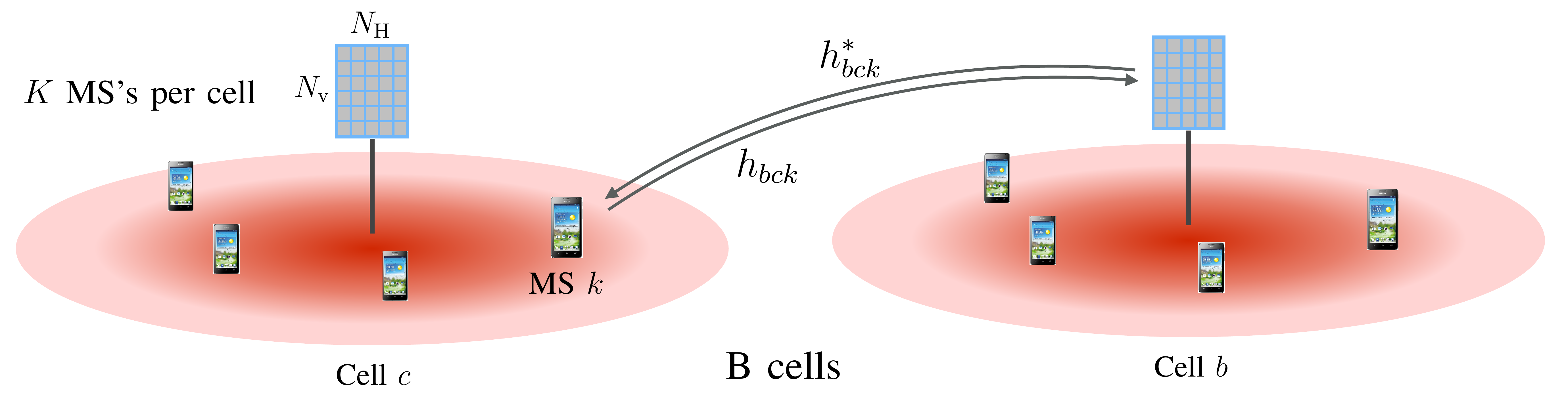}
	}
	\caption{A full-dimensional MIMO cellular model where each BS has a 2D antenna array and serves $K$ users.}
	\label{fig:Model}
\end{figure}

Consider a cellular system model consisting of $B$ cells with one BS and $K$ MS's in each cell, as shown in \figref{fig:Model}. Each BS is equipped with a two-dimensional (2D) antenna array of $N$ elements, $N=N_\mathrm{V}$ (vertical antennas) $\times N_\mathrm{H}$ (horizontal antennas), and each MS has a single antenna. We assume that all BSs and MSs are synchronized and operate a TDD protocol with universal frequency reuse. In the downlink, each BS $b=1, 2, ..., B$, applies an $N \times K$ precoder $\bF_b$ to transmit a symbol for each user, with a power constraint $\|\left[\bF_b\right]_{:,k}\|^2=1$, $k=1, 2, ..., K$. Uplink and downlink channels are assumed to be reciprocal. If $\bh_{b c k}$ denotes the $N \times 1$ uplink channel from  user $k$ in cell $c$ to BS $b$, then the received signal by this user in the downlink can be written as
\begin{equation}
y_{c k}=\sum_{b =1}^{B} \bh_{b c k}^* \bF_{b} \bs_{b} + n_{c k},
\label{eq:Received}
\end{equation}
where $\bs_{b}=\left[s_{b,1}, ..., s_{b,K}\right]^T$ is the $K \times 1$ vector of transmitted symbols from BS $b$, such that $\bbE \left[\bs_{b} \bs_{b}^*\right]=\frac{P}{K} \bI$, with $P$ representing the average total transmitted power, and $n_{c k}\sim \cN (0, \sigma^2 )$ is the Gaussian noise at user $k$ in cell $c$. It is useful to expand \eqref{eq:Received} as
\begin{equation}
y_{c k}=\underbrace{\bh_{c c k}^* \left[\bF_{c}\right]_{:,k} s_{c,k}}_\text{Desired signal}+\underbrace{ \sum_{m \neq k} \bh_{c c k}^* \left[\bF_{c}\right]_{:,m} s_{c,m}}_\text{Intra-cell interference}+ \underbrace{\sum_{b \neq c} \bh_{b c k}^* \bF_{b} \bs_{b}}_\text{Inter-cell interference} + n_{c k},
\end{equation}
to illustrate the different components of the received signal.

\subsection{Channel Model} \label{sec:ChModel}
We consider a full-dimensional MIMO configuration where 2D antenna arrays are deployed at the BS's. Consequently, the channels from the BS's to each user have a 3D structure. Extensive efforts are currently given to 3D channel measurements, modeling, and standardization \cite{Kammoun2014,Zhong2013}. One candidate is the Kronecker product correlation model, which provides a reasonable approximation of 3D covariance matrices\cite{Ying2014}. In this model, the covariance of the 3D channel $\bh_{b c k}$, defined as $\bR_{b c k}= \bbE\left[\bh_{b c k} \bh_{b c k}^*\right]$, is approximated by
\begin{equation}
\bR_{b c k}= \bR_{b c k}^\mathrm{A} \otimes \bR_{b c k}^\mathrm{E},
\end{equation}
where $\bR_{b c k}^\mathrm{A}$ and $\bR_{b c k}^\mathrm{E}$ represent the covariance matrices in the azimuth and elevation directions. If $\bR_{b c k}^\mathrm{A}= \bU_{b c k}^\mathrm{A} {\boldsymbol{\Lambda}_{b c k}^{\mathrm{A}}} {\bU_{b c k}^\mathrm{A}}^*$ and $\bR_{b c k}^\mathrm{E}= \bU_{b c k}^\mathrm{E} {\boldsymbol{\Lambda}_{b c k}^{\mathrm{E}}} {\bU_{b c k}^\mathrm{E}}^*$ are the eigenvalue decompositions of $\bR_{b c k}^\mathrm{A}$ and $\bR_{b c k}^\mathrm{E}$, then using Karhunen-Loeve representation, the channel $\bh_{b c k}$ can be expressed as 
\begin{equation} \label{eq:channel}
\bh_{b c k}= \left[\bU_{b c k}^\mathrm{A} {\boldsymbol{\Lambda}_{b c k}^{\mathrm{A}}}^{\frac{1}{2}} \otimes \bU_{b c k}^\mathrm{E} {\boldsymbol{\Lambda}_{b c k}^{\mathrm{E}}}^{\frac{1}{2}}\right] \bw_{b c k},
\end{equation}
where $\bw_{b c k} \sim \cN(\boldsymbol{0}, \bI)$ is a $\mathrm{rank}\left(\bR_{b c k}^\mathrm{A}\right) \mathrm{rank}\left(\bR_{b c k}^\mathrm{E}\right) \times 1$ \ vector, with $\mathrm{rank}(\bA)$ representing the rank of the matrix $\bA$. Without loss of generality, and to simplify the notation, we assume that all the users have the same ranks for the azimuth and elevation covariance matrices, which are denoted as $r_\mathrm{A}$ and $r_\mathrm{E}$.

\section{Multi-Layer Precoding: The General Concept} \label{sec:Concept}
In this section, we briefly introduce the motivation and general concept of multi-layer precoding. Given the system model in \sref{sec:Model}, the signal-to-interference-plus-noise ratio (SINR) at user $k$ in cell $c$ is
\begin{equation}
\mathrm{SINR}_{c k}=\frac{\frac{P}{K} \left|\bh_{c c k}^* \left[\bF_c\right]_{:,k}\right|^2}{\frac{P}{K} 
	\displaystyle{\sum_{m \neq k}} |\bh_{c c k}^* \left[\bF_c\right]_{:,m}|^2 
	+  \frac{P}{K} \displaystyle{\sum_{b \neq c}} \|\bh_{b c k}^* \bF_{b}\|^2 + \sigma^2},
\end{equation}
where $\left|\bh_{c c k}^* \left[\bF_c\right]_{:,k}\right|^2$ is the desired signal power, ${\sum_{m \neq k}} |\bh_{c c k}^* \left[\bF_c\right]_{:,m}|^2$ is the intra-cell multi-user interference, and ${\sum_{b \neq c}} \|\bh_{b c k}^* \bF_{b}\|^2$ is the inter-cell interference. Designing one precoding matrix per BS to manage all these kinds of signals by, for example, maximizing the system sum-rate is non-trivial. This normally leads to a non-convex problem whose closed-form solution is unknown \cite{Gesbert2010}. Also, coordinating inter-cell interference between BS's typically results in high cooperation overhead that makes the value of this cooperation limited \cite{Lozano2013}. Another challenge lies in the entire baseband implementation of these precoding matrices, which may yield high cost and power consumption in massive MIMO systems \cite{HeathJr2016}.  

Our objective is to design the precoding matrices, $\bF_{b}$, $b=1, 2, ..., B$, such that (i) they manage the inter-cell and intra-cell interference with low requirements on the channel knowledge, and (ii) they can be implemented using low-complexity hybrid analog/digital architectures \cite{Alkhateeb2014b}, i.e., with a small number of RF chains. Next, we present the main idea of multi-layer precoding, a potential solution to achieve these objectives.

Inspired by prior work on multi-user hybrid precoding \cite{Alkhateeb2014b} and joint spatial division multiplexing \cite{Adhikary2013}, and leveraging the directional characteristics of large-scale MIMO channels \cite{Yin2013}, we propose to design the precoding matrix $\bF_{c}$ as a product of a number of precoding matrices (layers). In this paper, we will consider a 3-layer precoding matrix
\begin{equation}
\bF_{c}= \bF_{c}^{(1)} \bF_{c}^{(2)} \bF_{c}^{(3)},
\end{equation}
where these precoding layers are designed according to the following criteria. 
\begin{itemize}
\item \textbf{One precoding objective per layer:} Each layer is designed to achieve only one precoding objective, e.g., maximizing desired signal power, minimizing inter-cell interference, or minimizing multi-user interference. This simplifies the precoding design problem and divides it into easier and/or convex sub-problems. Further, this decouples the required channel knowledge for each layer.

\item \textbf{Successive dimensionality reduction:} Each layer is designed such that the effective channel, including this layer, has smaller dimensions compared to the original channel. This reduces the channel training overhead of every precoding layer compared to the previous one. Further, this makes a successive reduction in the dimensions of the precoding matrices, which eases implementing them using hybrid analog/digital architectures \cite{HeathJr2016,ElAyach2014,Alkhateeb2014b,Han2015} with small number of RF chains. 

\item \textbf{Different channel statistics:}
These precoding objectives are distributed over the precoding layers such that $\bF_{c}^{(1)}$ requires slower time-varying channel state information compared with $\bF_{c}^{(2)}$, which in turn requires slower channel state information compared with $\bF_{c}^{(3)}$. Given the successive dimensionality reduction criteria, this means that the first precoding layer, which needs to be designed based on the large channel matrix, requires very large-scale channel statistics and needs to be updated every very long period of time. Similarly, the second and third precoding layers, which are designed based on the effective channels that have less dimensions, need to be updated more frequently.
\end{itemize}

In the next sections, we will present a specific multi-layer precoding design for full-dimensional massive MIMO systems, and show how it enables leveraging the large-scale MIMO channel characteristics to manage different kinds of interference with limited channel knowledge. We will also show how the multiplicative and successive reduced dimension structure of multi-layer precoding allows for efficient implementations using hybrid analog/digital architectures.

\section{Proposed Multi-Layer Precoding Design} \label{sec:Algorithm}

In this section, we present a multi-layer precoding algorithm for the full-dimensional massive MIMO system and channel models described in \sref{sec:Model}. Following the proposed multi-layer precoding criteria explained in \sref{sec:Concept}, we propose to design the $N_\mathrm{V} N_\mathrm{H} \times K$ precoding matrix $\bF_b$ of cell $b$, $b=1,...,B$ as 
\begin{align} \label{eq:prec_layer}
\bF_b&=\bF_b^{(1)} \bF_b^{(2)} \bF_b^{(3)}, 
\end{align}
where the first precoding layer $\bF^{(1)}_b$ is dedicated to avoid the out-of-cell interference, the second precoding layer $\bF_b^{(2)}$ is designed to maximize the effective signal power, and the third layer $\bF_b^{(3)}$ is responsible for canceling the intra-cell multi-user interference. Writing the received signal at user $k$ in cell $c$ in terms of the multi-layer precoding in \eqref{eq:prec_layer}, we get
\begin{equation} \label{eq:Received_0}
y_{c k}=\underbrace{\bh_{c c k}^* \bF^{(1)}_{c} \bF^{(2)}_{c} \bF^{(3)}_{c}\bs_{c}}_\text{received signal from serving BS}+ \underbrace{\sum_{b \neq c} \bh_{b c k}^* \bF^{(1)}_{b} \bF^{(2)}_{b} \bF^{(3)}_{b} \bs_{b}}_\text{received signal from other BSs} + n_{c k}.
\end{equation}  

Next, we explain in detail the proposed design of each precoding layer as well as the required channel knowledge.

\subsection{First Layer: Inter-Cell Interference Management}
We will design the first precoding layer ${\bF_b}^{(1)}$ to avoid the inter-cell interference, i.e., to cancel the second term of \eqref{eq:Received_0}. Exploiting the Kronecker structure of the channel model in \eqref{eq:channel}, we propose to construct the first layer as
\begin{equation} \label{eq:First_layer}
\bF^{(1)}_b={\bF^\mathrm{A}_b}^{(1)} \otimes {\bF^\mathrm{E}_b}^{(1)}.
\end{equation}

\noindent Adopting the channel model in \eqref{eq:channel} with $\overline{\bw}_{b c k}=\left({\boldsymbol{\Lambda}_{b c k}^{\mathrm{A}}}^{\frac{1}{2}} \otimes {\boldsymbol{\Lambda}_{b c k}^{\mathrm{E}}}^{\frac{1}{2}}\right) \bw_{b c k}$ and employing the Kronecker precoding structure in \eqref{eq:First_layer}, the second term of the received signal $y_{c k}$ in \eqref{eq:Received_0} can be expanded as
\begin{equation}
\sum_{b \neq c} \bh_{b c k}^* \bF^{(1)}_{b} \bF^{(2)}_{b} \bF^{(3)}_{b} \bs_{b}= \sum_{b \neq c}  \overline{\bw}_{b c k}^* \left(\bU_{b c k}^{\mathrm{A}^*} {\bF_b^\mathrm{A}}^{(1)} \otimes \bU_{b c k}^{\mathrm{E}^*} {\bF_b^\mathrm{E}}^{(1)} \right)  \bF^{(2)}_{b} \bF^{(3)}_{b} \bs_{b}.
\end{equation}

\noindent Avoiding the inter-cell interference for the users at cell $c$ can then be satisfied if $\bF_b^{(1)}, b \neq c$ is designed such that $ \bU_{b c k}^{\mathrm{E}^*} {\bF_b^\mathrm{E}}^{(1)}=\boldsymbol{0}, \forall k$. 
Equivalently, for any cell $c$ to avoid making interference on the other cell users, it designs its precoder ${\bF_c^\mathrm{E}}^{(1)}$ to be in the null-space of the elevation covariance matrices of all the channels connecting BS $c$ and the other cell users, i.e., to be in $\mathrm{Null}\left(\sum_{b \neq c} \sum_{k\in \cK_b} \bR_{c b k}^\mathrm{E}\right)$ with $\cK_b$ denoting the subset of $K$ scheduled users in cell $b$. Note that with large numbers of vertical antennas and for several channel models, this elevation inter-cell interference covariance will not have full rank and may actually have just a small overlap with the desired users' channels, as will be shown in \sref{sec:Perf}. 

Thanks to the directional structure of large-scale MIMO channels, we further note that with a large number of vertical antennas, $N_\mathrm{V}$, the null-space $\mathrm{Null}\left(\sum_{b \neq c} \sum_{k\in \cK_b} \bR_{c b k}^\mathrm{E}\right)$ of different scheduled users $\cK_b$ will have a large overlap. This means that designing ${\bF_c^\mathrm{E}}^{(1)}$ based on the  interference covariance subspace averaged over different scheduled users may be sufficient. Leveraging this intuition relaxes the required channel knowledge to design the first precoding layer. Hence, we define the average interference covariance matrix for BS $c$ as
\begin{equation}
\bR_{c}^{\mathrm{I}}=\sum_{b \neq c} \bbE_{\cK_b} \left[\bR^{E}_{c b k}\right].
\label{eq:Interference}
\end{equation}

\noindent In this paper, we manage the inter-cell interference in the elevation space, and therefore, set ${\bF^\mathrm{A}_b}^{(1)}=\bI_{N_\mathrm{H}}$. Let $\left[\bU_{c}^\mathrm{I} \ \bU_{c}^\mathrm{NI}\right] \boldsymbol{\Lambda}_c \left[\bU_{c}^\mathrm{I} \bU_{c}^\mathrm{NI}\right]^*$ represent the eigen-decomposition of $\bR_{c}^{\mathrm{I}}$ with the $N_\mathrm{v}\times r_\mathrm{I}$ matrix $\bU_{c}^\mathrm{I}$ and $N_\mathrm{v}\times r_\mathrm{NI}$ matrix $\bU_{c}^\mathrm{NI}$ corresponding to the non-zero and zero eigenvalues, respectively. Then, we design the first precoding layer $\bF_c^{(1)}$ to be in the null-space of the average interference covariance matrix by setting
\begin{equation} \label{eq:first_layer2}
\bF_c^{(1)}=\bI_{N_\mathrm{H}} \otimes \bU_{c}^\mathrm{NI},
\end{equation}
which is an $N_\mathrm{V} N_\mathrm{H} \times \rN N_\mathrm{H}$ matrix.

Given the design of the first precoding layer in \eqref{eq:first_layer2}, and defining the $\rN \times \rE$ effective elevation eigen matrix $\overline{\bU}_{c c k}^\mathrm{E}={\bUN_{c}}^* \bU^\mathrm{E}_{c c k}$, the received signal at user $k$ of cell $c$ in \eqref{eq:Received_0} becomes
\begin{equation}
y_{c k} = \overline{\bw}_{c c k}^*  \left(\bU_{c c k}^{\mathrm{A}^*} \otimes \overline{\bU}_{c c k}^{\mathrm{E}^*}\right)  \bF^{(2)}_{c} \bF^{(3)}_{c} \bs_{c}+ n_{c k}.
\label{eq:Received_2}
\end{equation}
\noindent Note that the first precoding layer in \eqref{eq:First_layer} acts as a spatial filter that entirely eliminates the inter-cell interference in the elevation domain. This filter, however, may have a negative impact on the desired signal power for the served users at cell $c$ if they share the same elevation subspace with the out-of-cell users. Therefore, this first layer precoding design is particularly useful for systems with low-rank elevation subspaces. It is worth mentioning here that recent measurements of 3D channels show that elevation eigenspaces may have low ranks at both low-frequency and millimeter wave systems \cite{3GPP_LTE,Akdeniz2014,Hur2016}. Relaxations of the precoding design in \eqref{eq:First_layer} are proposed in \sref{sec:Discuss} to compromise between inter-cell interference avoidance and desired signal power degradation.

\textbf{Required channel knowledge:} The design of the first precoding layer in \eqref{eq:First_layer} requires only the knowledge of the interference covariance matrix \textit{averaged} over different scheduled users. It depends therefore on very large time-scale channel statistics, which means that this precoding layer needs to be updated every very long period of time. This makes its acquisition overhead relatively small from an overall system perspective. In fact, this is a key advantage of the decoupled multi-layer precoding structure that allows dedicating one layer for canceling the out-of-cell interference based on large time-scale channel statistics while leaving the other layers to do other functions based on different time scales. This can not be done by typical precoding approaches that rely on one precoding matrix to manage different precoding objectives, as this precoding matrix will likely need to be updated based on the fastest channel statistics.

\subsection{Second Layer: Desired Signal Beamforming}
The second precoding layer $\bF_c^{(2)}$ is designed to focus the transmitted power on the served users' effective subspaces, i.e., on the user channels' subspaces including the effect of the first precoding layer. If we define the matrix consisting of the effective eigenvectors of user $k$ in cell $c$ as $\overline{\bU}_{c c k}=\left(\bU_{c c k}^{\mathrm{A}} \otimes \overline{\bU}_{c c k}^{\mathrm{E}}\right)$, then we design the second precoding layer $\bF_c^{(2)}$ as a large-scale conjugate beamforming matrix, i.e., we set
\begin{equation} \label{eq:second_layer}
\bF_c^{(2)}=\left[\overline{\bU}_{c c 1}, ..., \overline{\bU}_{c c K}\right],
\end{equation}
which has $N_\mathrm{H} \rN \times K \rA \rE$ dimensions. Given the second precoding layer design, and defining $\bG_{c,(k,r)}=\overline{\bU}_{c c k}^* \overline{\bU}_{c c r}$, the received signal by user $k$ in cell $c$ can be written as
\begin{equation}
y_{c k} = \overline{\bw}_{c c k}^* \left[\bG_{c,(k,1)}, ..., \bG_{c,(k,K)}\right] \bF_{c}^{(3)} \bs_c + n_{c k}.
\label{eq:Received_2x}
\end{equation}
The main objectives of this precoding layer can be summarized as follows. First, the effective channel vectors, including the first and second precoding layers, will have reduced dimensions compared to the original channels, especially when large numbers of antennas are employed. This reduces the overhead associated with training the effective channels, which is particularly important for FDD systems \cite{Adhikary2013,Alkhateeb2014b}. Second, this precoding layer supports the multiplicative structure  of multi-layer precoding with successive dimensionality reduction, which simplifies its implementation using hybrid analog/digital architectures, as will be briefly discussed in \sref{sec:Discuss}. 

\textbf{Required channel knowledge:} The design of the second precoding layer requires only the knowledge of the effective eigenvector matrices $\overline{\bU}_{c c k}, k=1, ..., K$, which depends on the large-scale channel statistics. It is worth noting that during the uplink training of the matrices $\overline{\bU}_{c c k}$, the first precoding layer works as spatial filtering for the other cell interference. Hence, this reduces (and ideally eliminates) the channel estimation error due to pilot reuse among  cells, and consequently leads to a pilot decontamination effect. 

\subsection{Third Layer: Multi-User Interference Management}
The third precoding layer $\bF_c^{(3)}$ is designed to manage the multi-user interference based on the effective channels, i.e., including the effect of the first and second precoding layers. If we define the effective channel of user $k$ in cell $c$ as $\overline{\bh}_{c k}=\left[\bG_{c,(k,1)}, ..., \bG_{c,(k,K)}\right]^* \overline{\bw}_{c c k}$, and let $\overline{\bH}_c=[\overline{\bh}_{c 1}, ..., \overline{\bh}_{c K}]$, then we construct the third precoding layer $\bF_c^{(3)}$ as a zero-forcing matrix
\begin{equation} \label{eq:third_layer}
\bF_c^{(3)}=\overline{\bH}_c\left(\overline{\bH}_c^* \overline{\bH}_c\right)^{-1} \bUpsilon_{c},
\end{equation}
where $\bUpsilon_{c}$ is a diagonal power normalization matrix that ensures satisfying the precoding power constraint $\|\left[\bF_b\right]_{:,k}\|^2=1$. Note that this zero-forcing design requires $N_\mathrm{H} \rN \geq K \rA \rE$, which is satisfied with high probability in massive MIMO systems, especially with sparse and low-rank channels. Given the design of the precoding matrix $\bF_c^{(3)}$, the received signal at user $k$ in cell $c$ can be expressed as
\begin{equation}
y_{c k} =  \left[\bUpsilon_{c}\right]_{k,k} s_{c,k} + n_{c k}.
\label{eq:Received_3}
\end{equation}

\textbf{Required channel knowledge:}
The design of the third precoding layer relies on the instantaneous effective channel knowledge. Thanks to the first and second precoding layers, these effective channels should have much smaller dimensions compared to the original channels in massive MIMO systems, which reduces the required training overhead.

\section{Performance Analysis} \label{sec:Perf}
The proposed multi-layer precoding design in \sref{sec:Algorithm} eliminates inter-cell interference as well as multi-user intra-cell interference, assuming that every BS $b$ has the knolwedge of its users' effective channels and channel covariance $\overline{\bH}_b$, $\left\{\bR_{cck}\right\}$ and the averaged inter-cell interference covariance in the elevation dimension $\bR_c^\mathrm{I}$. This interference cancellation, however, may have a penalty on the desired signal power which is implicitly captured by the power normalization factor $\left[\bUpsilon_c\right]_{k,k}$ in \eqref{eq:Received_3}. In this section, we will first characterize the achievable rate by the proposed multi-layer precoding design for a general channel model in Lemma \ref{lem:Ach_Rate}. Then, we will show that this precoding design can achieve optimal performance for some special yet important channel models in Section \ref{subsec:OneRing} and \sref{subsec:Physical}.  

\begin{lemma} \label{lem:Ach_Rate}
Consider the system and channel models in \sref{sec:Model} and the multi-layer precoding design in \sref{sec:Algorithm}. The achievable rate by user $k$ in cell $c$ is given by
\begin{equation} \label{eq:Ach_Rate_1}
R_{ck}=\log_2\left(1+ \frac{\mathsf{SNR}}{\left( \bW^*_c {\bF_{c}^{(2)}}^* {\bF_{c}^{(2)}} {\bF_{c}^{(2)}}^* {\bF_{c}^{(2)}}  \bW_c \right)_{k,k}^{-1}}\right),
\end{equation}	
where $\bW_c=\bI_{K} \circ \left[\overline{\bw}_{c c 1}, ..., \overline{\bw}_{c c K}\right]$ and $\mathsf{SNR}=\frac{P}{K \sigma^2}$.
\end{lemma}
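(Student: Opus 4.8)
The plan is to begin from the fully reduced received signal \eqref{eq:Received_3}, $y_{ck} = [\bUpsilon_c]_{k,k} s_{c,k} + n_{ck}$, which already encodes that the first layer \eqref{eq:first_layer2} annihilates the inter-cell term and that the zero-forcing third layer \eqref{eq:third_layer} annihilates the intra-cell term. Since only the scaled desired symbol and the noise $n_{ck}\sim\cN(0,\sigma^2)$ remain and $\bbE[|s_{c,k}|^2]=P/K$, the effective SINR is interference-free and equals $\mathsf{SNR}\,|[\bUpsilon_c]_{k,k}|^2$ with $\mathsf{SNR}=P/(K\sigma^2)$; hence $R_{ck}=\log_2(1+\mathsf{SNR}\,|[\bUpsilon_c]_{k,k}|^2)$. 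The whole task then reduces to evaluating the normalization factor $|[\bUpsilon_c]_{k,k}|^2$ and matching it to the reciprocal of the $(k,k)$ entry in \eqref{eq:Ach_Rate_1}.

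Next I would pin down $[\bUpsilon_c]_{k,k}$ from the power normalization. From \eqref{eq:third_layer}, the $k$-th column of $\bF_c^{(3)}$ is $[\bUpsilon_c]_{k,k}\,\overline{\bH}_c(\overline{\bH}_c^*\overline{\bH}_c)^{-1}\mathbf{e}_k$, where $\mathbf{e}_k$ is the $k$-th standard basis vector. Imposing unit norm on this effective precoding column and cancelling the middle $\overline{\bH}_c^*\overline{\bH}_c$ against one inverse leaves $|[\bUpsilon_c]_{k,k}|^2\,\mathbf{e}_k^*(\overline{\bH}_c^*\overline{\bH}_c)^{-1}\mathbf{e}_k=1$, i.e. $|[\bUpsilon_c]_{k,k}|^2 = 1/\big[(\overline{\bH}_c^*\overline{\bH}_c)^{-1}\big]_{k,k}$. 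This is the familiar zero-forcing SINR; substituting it into the rate already gives $R_{ck}=\log_2\!\big(1+\mathsf{SNR}/[(\overline{\bH}_c^*\overline{\bH}_c)^{-1}]_{k,k}\big)$.

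The remaining and central step is to rewrite $\overline{\bH}_c^*\overline{\bH}_c$ as the matrix $\bW_c^*{\bF_c^{(2)}}^*\bF_c^{(2)}{\bF_c^{(2)}}^*\bF_c^{(2)}\bW_c$ in \eqref{eq:Ach_Rate_1}. For this I would establish the identity $\overline{\bH}_c={\bF_c^{(2)}}^*\bF_c^{(2)}\,\bW_c$. From the definitions $\bG_{c,(k,r)}=\overline{\bU}_{cck}^*\overline{\bU}_{ccr}$ and \eqref{eq:second_layer}, the stacked blocks satisfy $[\bG_{c,(k,1)},\dots,\bG_{c,(k,K)}]=\overline{\bU}_{cck}^*\bF_c^{(2)}$, so the effective channel in \eqref{eq:Received_2x} is $\overline{\bh}_{ck}={\bF_c^{(2)}}^*\overline{\bU}_{cck}\overline{\bw}_{cck}$. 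Because $\bW_c=\bI_K\circ[\overline{\bw}_{cc1},\dots,\overline{\bw}_{ccK}]$ is a column-wise (Khatri--Rao) product, its $k$-th column is $\mathbf{e}_k\otimes\overline{\bw}_{cck}$, and multiplying $\bF_c^{(2)}=[\overline{\bU}_{cc1},\dots,\overline{\bU}_{ccK}]$ by it selects exactly $\overline{\bU}_{cck}\overline{\bw}_{cck}$; stacking over $k$ gives $\bF_c^{(2)}\bW_c=[\overline{\bU}_{cc1}\overline{\bw}_{cc1},\dots,\overline{\bU}_{ccK}\overline{\bw}_{ccK}]$ and hence $\overline{\bH}_c={\bF_c^{(2)}}^*\bF_c^{(2)}\bW_c$. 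Taking the Gram matrix and using that ${\bF_c^{(2)}}^*\bF_c^{(2)}$ is Hermitian yields $\overline{\bH}_c^*\overline{\bH}_c=\bW_c^*{\bF_c^{(2)}}^*\bF_c^{(2)}{\bF_c^{(2)}}^*\bF_c^{(2)}\bW_c$, which is precisely the bracketed matrix in \eqref{eq:Ach_Rate_1}, completing the argument.

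I expect the main obstacle to be the index and block-structure bookkeeping in the third step: matching the $k$-th column of $\overline{\bH}_c$ to that of ${\bF_c^{(2)}}^*\bF_c^{(2)}\bW_c$ relies on the block-transpose relation $\bG_{c,(k,r)}^*=\bG_{c,(r,k)}$ and on the column-selection property of the Khatri--Rao product, both easy to misapply given the Kronecker-structured effective eigenbases $\overline{\bU}_{cck}=\bU_{cck}^{\mathrm{A}}\otimes\overline{\bU}_{cck}^{\mathrm{E}}$. A secondary point I would state carefully is the normalization convention in the second step: the factor $[\bUpsilon_c]_{k,k}$ normalizes the effective zero-forcing column $[\bF_c^{(3)}]_{:,k}$ in the reduced channel domain, which is what produces the clean $[(\overline{\bH}_c^*\overline{\bH}_c)^{-1}]_{k,k}$ form. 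Relating this exactly to the full-precoder constraint $\|[\bF_c]_{:,k}\|^2=1$ uses the orthonormality ${\bF_c^{(1)}}^*\bF_c^{(1)}=\bI$ of the first layer \eqref{eq:first_layer2}, and I would make the role of the (generally non-isometric) second layer in this reduction explicit rather than silently assume it preserves column norms.
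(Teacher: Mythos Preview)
Your proposal is essentially the paper's argument: the two key ingredients—the semi-unitarity ${\bF_c^{(1)}}^*\bF_c^{(1)}=\bI$ and the identity $\overline{\bH}_c={\bF_c^{(2)}}^*\bF_c^{(2)}\bW_c$ (which you derive exactly as the paper does, via the block relations $\bG_{c,(k,r)}=\overline{\bU}_{cck}^*\overline{\bU}_{ccr}$ and the Khatri--Rao column-selection property)—are precisely what Appendix~\ref{app:Ach_Rate} uses.

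The only substantive difference is the order of operations in your normalization step, and it is exactly the point you yourself flag. You first impose $\|[\bF_c^{(3)}]_{:,k}\|=1$ to get $|[\bUpsilon_c]_{k,k}|^2=1/[(\overline{\bH}_c^*\overline{\bH}_c)^{-1}]_{k,k}$ and only afterwards discuss reconciling this with the actual constraint $\|[\bF_c]_{:,k}\|=1$; the paper instead starts from the full constraint $\|\bF_c^{(1)}\bF_c^{(2)}\bF_c^{(3)}\mathbf{e}_k\|^2=1$, removes $\bF_c^{(1)}$ via semi-unitarity, substitutes both $\bF_c^{(3)}=\overline{\bH}_c(\overline{\bH}_c^*\overline{\bH}_c)^{-1}\bUpsilon_c$ and $\overline{\bH}_c={\bF_c^{(2)}}^*\bF_c^{(2)}\bW_c$, and simplifies. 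Your instinct that ``the role of the (generally non-isometric) second layer in this reduction'' must be made explicit is correct: the extra ${\bF_c^{(2)}}^*\bF_c^{(2)}$ sandwiched between $\bF_c^{(1)}$ and $\bF_c^{(3)}$ does not simply disappear, so computing $\|[\bF_c^{(3)}]_{:,k}\|$ alone does not give the right normalization. Reordering your second step to mirror the paper—impose the full constraint first, then substitute—closes this gap and is the only place your write-up needs tightening.
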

\begin{proof}
See Appendix \ref{app:Ach_Rate}
\end{proof}

\noindent Note that the achievable rate in \eqref{eq:Ach_Rate_1} is upper bounded by the single-user rate---the rate when the user is solely served in the network---which is given by $\overline{R}_{ck}=\log_2\left(1+\mathsf{SNR} \left\|\overline{\bw}_{c c k}\right\|^2\right)$.  Therefore, Lemma \ref{lem:Ach_Rate} indicates that the proposed multi-layer precoding can achieve optimal performance if ${\bF_c^{(2)}}^*{\bF_c^{(2)}}=\bI$. To achieve that, it is sufficient to satisfy the following two conditions.
\begin{enumerate}[(i)] 
	\item {$\bG_{c,(k,m)}=\boldsymbol{0}, \forall m\neq k$, a condition that captures the impact of multi-user interference cancellation on the desired signal power.}
	\item {$\bG_{c,(k,k)}=\left({\bU_{c c k}^{\mathrm{A}}}^* \otimes {{\bU}_{c c k}^{\mathrm{E}}}^*\right){\bF_c^{(1)}} {\bF_c^{(1)}}^*\left({\bU_{c c k}^{\mathrm{A}}} \otimes {\bU}_{c c k}^{\mathrm{E}}\right)=\bI, \forall k$, a condition that captures the possible impact of the inter-cell interference avoidance on the desired signal power.}
\end{enumerate} 

Next, we characterize the performance of multi-layer precoding for two special yet important channel models, namely, the one-ring and single-path channel models.

\subsection{Performance with One-Ring Channel Models} \label{subsec:OneRing}

Motivated by its analytical tractability and meaningful geometrical interpretation, we will consider the one-ring channel model in this subsection \cite{Shiu2000,Petrus2002,Abdi2002,Zhang2007}. This will enable us to draw useful insights into the performance of multi-layer precoding, which can then be extended to more general channel models. Note that due to its tractability, one-ring channel models have also been adopted in prior massive MIMO work \cite{Bjoernson2014,Adhikary2013,Yin2013,Shen2016}.

\begin{figure}[h]	
	\centering
	\includegraphics[width=.45\columnwidth]{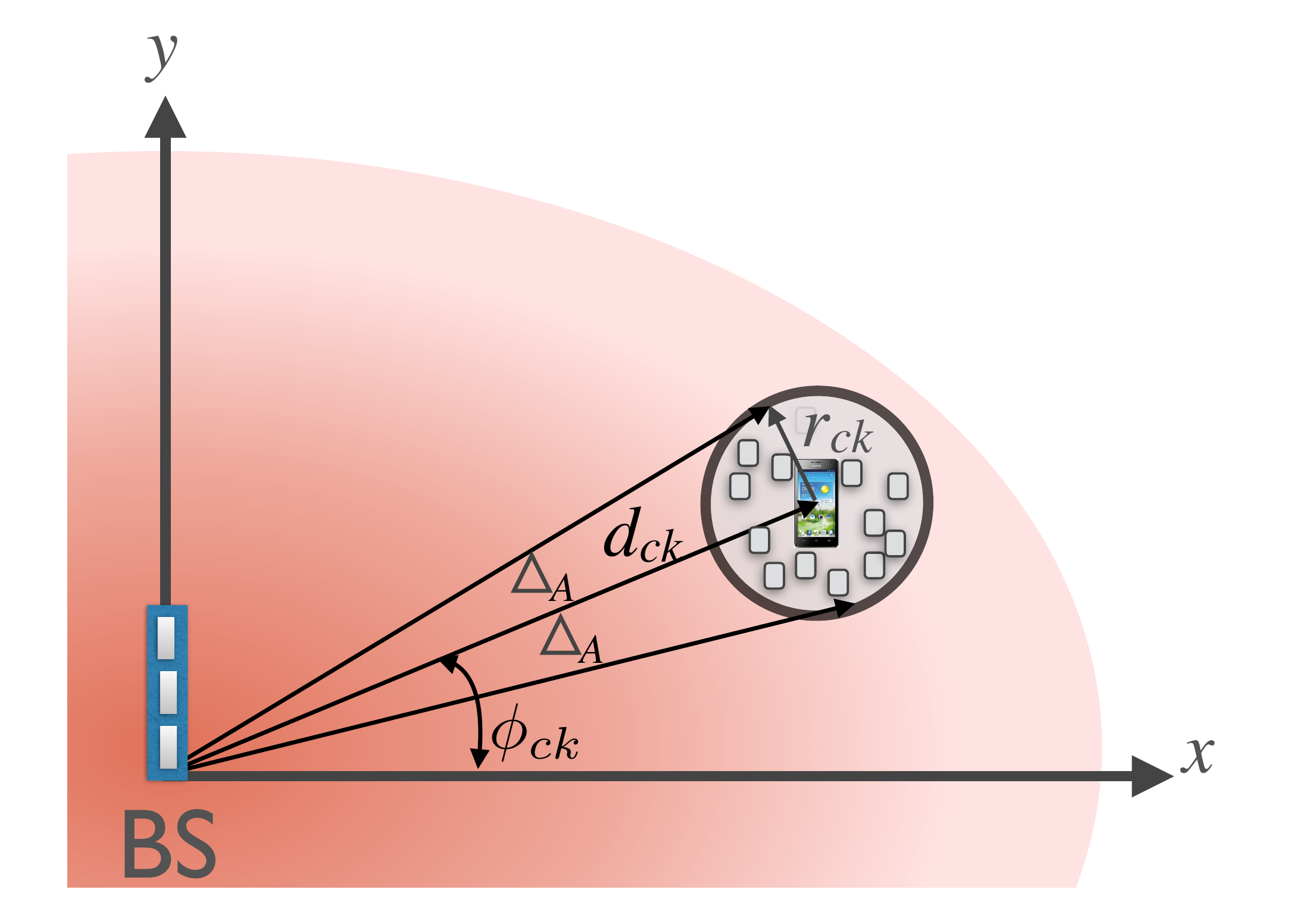}
	\caption{An illustration of the one-ring channel model in the azimuth direction. The BS that, has a UPA in the y-z plane, serves a mobile user in the x-y plane at distance $d_{c k}$. The user is surrounded by scatterers on a ring of radius $r_{c k}$, and its channel experiences an azimuth angular spread $\Delta_\rm{A}$.}
	\label{fig:OneRing}
\end{figure}

The one-ring channel model describes the case when a BS is elevated away from scatterers and is communicating with a mobile user that is surrounded by a ring of scatterers. Consider a BS at height $H_\mathrm{BS}$ employing an $N_V \times N_H$ UPA, and serving a mobile user at a distance $d_{ck}$  with azimuth and elevation angles $\phi_{ck}, \theta_{ck}$, as depicted in \figref{fig:OneRing}. If the mobile is surrounded by scatterers on a ring of radius $r_{ck}$ in the azimuth dimension, then the azimuth angular spread $\Delta_\mathrm{A}$ can be approximated as $\Delta_\mathrm{A}=\arctan{\left(\frac{r_{ck}}{d_{ck}}\right)}$. Further, assuming for simplicity that the received power is uniformly distributed over the ring, then the correlation between any two antenna elements with orders $n_1, n_2$ in the horizontal direction is given by 
\begin{equation} \label{eq:A_Cov}
\left[\bR^{A}_{cck}\right]_{n_1,n_2}=\frac{1}{2 \Delta_\mathrm{A}}\int_{-\Delta_\mathrm{A}}^{\Delta_\mathrm{A}} {e^{- j \frac{2 \pi}{\lambda} d (n_2-n_1) \sin(\phi_{ck}+\alpha)\sin(\theta_{ck})} d\alpha}.
\end{equation}

\noindent The elevation correlation matrix can be similarly defined for the user $k$, in terms of its elevation angular spread $\Delta_\mathrm{E}$.

In the next theorem, we characterize the achievable rate for an arbitrary user $k$ in cell $c$ under the one-ring channel model.

\begin{theorem} \label{th:CellCenter}
Consider the full-dimensional cellular system model in \sref{sec:SysModel} with cells of radius $r_\mathrm{cell}$, and the channel model in \sref{sec:ChModel} with the one-ring correlation matrices in \eqref{eq:A_Cov}. Let $\phi_{c k}, \theta_{c k}$ denote the azimuth and elevation angles of user $k$ at cell $c$, and let $\Delta_\mathrm{A}, \Delta_\mathrm{E}$ represent the azimuth and elevation angular spread. Define the maximum distance with no blockage on the desired signal power as $d_\mathrm{max}=H_\mathrm{BS} \tan\left(\arctan\left(\frac{r_\mathrm{cell}}{H_\mathrm{BS}}\right)- 2 \Delta_\mathrm{E}\right)$. If $\left|\phi_{c k}-\phi_{c m}\right| \geq 2 \Delta_\rm{A}$ or $\left|\theta_{c k}-\theta_{c m}\right| \geq 2 \Delta_\rm{E}$, $\forall m \neq k$, and $d_{c k} \leq d_\rm{max}$, then the achievable rate of user $k$ at cell $c$, when applying the multi-layer precoding algorithm in \sref{sec:Algorithm}, satisfies
\begin{equation} \label{eq:Rate_InCell}
\lim_{N_\rm{V}, N_\rm{H} \rightarrow \infty} R_{c k}=\overline{R}_{c k}=\log_2\left(1+\mathsf{SNR} \left\|\overline{\bw}_{c c k}\right\|^2 \right).
\end{equation} 
\end{theorem}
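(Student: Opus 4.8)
The plan is to reduce the claim to the two sufficient conditions (i) and (ii) stated right after Lemma~\ref{lem:Ach_Rate}, and then to verify each of them in the large-array limit using the eigenstructure of the one-ring covariance matrices. By Lemma~\ref{lem:Ach_Rate} the rate $R_{ck}$ equals the single-user rate $\overline{R}_{ck}$ in \eqref{eq:Rate_InCell} whenever ${\bF_c^{(2)}}^*{\bF_c^{(2)}}=\bI$, because then the inner Gram factors collapse and $\left(\bW_c^*\bW_c\right)_{k,k}^{-1}=1/\|\overline{\bw}_{cck}\|^2$. Since $\bF_c^{(2)}$ is the block concatenation \eqref{eq:second_layer} of the effective eigenmatrices $\overline{\bU}_{cck}$, this Gram matrix is the block matrix whose blocks are exactly $\bG_{c,(k,m)}=\overline{\bU}_{cck}^*\overline{\bU}_{ccm}$, so it suffices to establish (i) $\bG_{c,(k,m)}\to\boldsymbol{0}$ for $m\neq k$ and (ii) $\bG_{c,(k,k)}\to\bI$. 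I would first record the trivial part of (ii): since $\bF_c^{(1)}=\bI_{N_\mathrm{H}}\otimes\bUN_c$, the Kronecker algebra splits $\bG_{c,(k,k)}$ into $\left({\bU_{cck}^\mathrm{A}}^*\bU_{cck}^\mathrm{A}\right)\otimes\left({\bU_{cck}^\mathrm{E}}^*\bUN_c{\bUN_c}^*\bU_{cck}^\mathrm{E}\right)$, whose azimuth factor is already $\bI_{\rA}$ by orthonormality of eigenvectors, leaving only the elevation factor to control.

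The analytical engine is the asymptotic steering/DFT concentration of one-ring covariance eigenvectors. For a uniform linear array the Toeplitz covariance in \eqref{eq:A_Cov} has, as the array grows, an eigenspace that converges to the subspace spanned by array response vectors whose normalized spatial frequencies fill the interval swept by the angular support, i.e.\ $\tfrac{d}{\lambda}\sin(\phi_{ck}+\alpha)\sin\theta_{ck}$ for $\alpha\in[-\Delta_\mathrm{A},\Delta_\mathrm{A}]$ in azimuth and the analogous interval in elevation. I would invoke this together with the companion fact that response vectors with non-overlapping frequency supports are asymptotically orthogonal. These are precisely the favorable-propagation/asymptotic-orthogonality facts already established for one-ring massive MIMO, so I would cite \cite{Adhikary2013,Yin2013} rather than reprove them.

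For condition (i) I would factor $\bG_{c,(k,m)}=\left({\bU_{cck}^\mathrm{A}}^*\bU_{ccm}^\mathrm{A}\right)\otimes\left({\bU_{cck}^\mathrm{E}}^*\bUN_c{\bUN_c}^*\bU_{ccm}^\mathrm{E}\right)$ and argue that at least one Kronecker factor vanishes in the limit. The hypothesis $\left|\phi_{ck}-\phi_{cm}\right|\geq 2\Delta_\mathrm{A}$ or $\left|\theta_{ck}-\theta_{cm}\right|\geq 2\Delta_\mathrm{E}$ guarantees that for every $m\neq k$ either the azimuth or the elevation angular supports of users $k$ and $m$ are disjoint; by the concentration result the corresponding eigenspaces become orthogonal and annihilate that factor (the extra projector $\bUN_c{\bUN_c}^*$ in the elevation factor only intersects with a DFT subspace and so preserves the disjoint-support orthogonality). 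Since a Kronecker product vanishes once either factor does, $\bG_{c,(k,m)}\to\boldsymbol{0}$.

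Condition (ii) is where the distance constraint $d_{ck}\leq d_\mathrm{max}$ enters, and I expect this to be the main obstacle. I would convert $d_{ck}\leq d_\mathrm{max}$ into a statement of disjoint elevation supports: in the nadir parametrization $\arctan(d/H_\mathrm{BS})$, every out-of-cell user contributing to $\bR_c^\mathrm{I}$ in \eqref{eq:Interference} lies at horizontal distance at least $r_\mathrm{cell}$ from BS $c$, hence its elevation support sits above $\arctan(r_\mathrm{cell}/H_\mathrm{BS})-\Delta_\mathrm{E}$, whereas the very definition of $d_\mathrm{max}$ forces the desired user's support to lie below this threshold. The supports being disjoint, $\bU_{cck}^\mathrm{E}$ asymptotically lies in $\mathrm{Null}(\bR_c^\mathrm{I})=\mathrm{span}(\bUN_c)$, so ${\bUN_c}^*$ acts isometrically on its columns and ${\bU_{cck}^\mathrm{E}}^*\bUN_c{\bUN_c}^*\bU_{cck}^\mathrm{E}\to\bI_{\rE}$, giving $\bG_{c,(k,k)}\to\bI$. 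Combining (i) and (ii) yields ${\bF_c^{(2)}}^*{\bF_c^{(2)}}\to\bI$ and hence \eqref{eq:Rate_InCell}. The delicate points I would treat most carefully are the uniformity of the eigenvector (not merely eigenvalue) concentration as $N_\mathrm{V},N_\mathrm{H}\to\infty$, and the injectivity of the map $\theta\mapsto\sin\theta$ over the relevant elevation range, so that ``disjoint angles'' genuinely implies ``disjoint spatial frequencies.''
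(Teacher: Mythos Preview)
Your proposal is correct and follows essentially the same route as the paper: reduce to the two sufficient conditions (i) $\bG_{c,(k,m)}\to\boldsymbol{0}$ and (ii) $\bG_{c,(k,k)}\to\bI$, factor each $\bG$ via the Kronecker structure, and use the asymptotic eigenspace concentration of one-ring covariances to show that disjoint angular supports imply orthogonal eigenspaces. The only cosmetic difference is that the paper states and proves its own instance of the concentration fact (its Lemma~\ref{lem:NullRange}, derived from \cite{Yin2013}) rather than citing \cite{Adhikary2013,Yin2013} directly, and it makes explicit that the $d_{ck}\le d_\mathrm{max}$ hypothesis is what allows writing $\bU_{cck}^\mathrm{E}=\bUN_c\bA_{ck}$, which is then used in \emph{both} (i) and (ii); your parenthetical about the projector ``preserving disjoint-support orthogonality'' is correct precisely for this reason, and you should make that dependence explicit when you write it up.
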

\begin{proof}
See Appendix \ref{app:CellCenter}
\end{proof}

Theorem \ref{th:CellCenter} indicates that the achievable rate with multi-layer precoding converges to the optimal single-user rate for the users that are not at the cell edge ($r_\rm{cell}-d_\mathrm{max}$ away from cell edge), provided that they maintain either an azimuth or elevation separation by double the angular spread. For example, consider a cellular system with cell radius $100$m and BS antenna height $50$m, if the elevation angular spread equals $\Delta_\rm{E}=3^\circ$, then all the users within $\sim 80$m distance from the BS achieve optimal rate. It is worth noting here that these rates do not experience any pilot contamination or multi-user interference impact and can, therefore, grow with the antenna numbers or transmit power without any bound on the maximum values that they can reach. 

The angular separation between the users in Theorem \ref{th:CellCenter} can be achieved via user scheduling techniques or other network optimization tools. In fact, even without user scheduling, this angular separation is achieved with high probability as will be illustrated by simulations in \sref{sec:Results} under reasonable system and channel assumptions.  Further, for sparse channels with finite number of paths, it can be shown that this angular separation is not required to achieve the optimal rate. Studying these topics are interesting future extensions.

In the following theorem, we derive a lower bound on the achievable rate with multi-layer precoding for the cell-edge users.

\begin{theorem}\label{th:CellEdge}
	Consider the system and channel models described in Theorem \ref{th:CellCenter}. If $\left|\phi_{c k}-\phi_{c m}\right| \geq 2 \Delta_\rm{A}$ or $\left|\theta_{c k}-\theta_{c m}\right| \geq 2 \Delta_\rm{E}$, $\forall m \neq k$, and $d_\rm{max} \leq d_{c k} \leq r_\rm{cell}$, then the achievable rate of user $k$ at cell $c$, when applying the multi-layer precoding algorithm in \sref{sec:Algorithm}, satisfies
	\begin{equation} \label{eq:Rate_CellEdge}
	\lim_{N_\rm{V}, N_\rm{H} \rightarrow \infty} R_{c k} \geq \log_2\left(1+\mathsf{SNR} \left\|\overline{\bw}_{c c k}\right\|^2 \sigma^2_\rm{min}\left(\overline{\bU}_{c c k}^\rm{E}\right)\right),
	\end{equation} 
	where $\sigma_\rm{min}\left(\bA\right)$ denotes the minimum singular value of the matrix $\bA$.
\end{theorem}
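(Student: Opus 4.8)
The plan is to start from the exact rate expression of Lemma \ref{lem:Ach_Rate} and to track how the two sufficient conditions (i)--(ii) identified just after that lemma degrade as a user is moved to the cell edge. Condition (i), $\bG_{c,(k,m)}=\boldsymbol{0}$ for $m\neq k$, is an orthogonality statement between the effective eigenspaces of distinct users; it is driven purely by the angular-separation hypothesis $|\phi_{c k}-\phi_{c m}|\ge 2\Delta_\mathrm{A}$ or $|\theta_{c k}-\theta_{c m}|\ge 2\Delta_\mathrm{E}$ and is therefore independent of the user's distance, so it can be reused from the proof of Theorem \ref{th:CellCenter}. Condition (ii), $\bG_{c,(k,k)}=\bI$, is the only one that fails for $d_{c k}>d_\mathrm{max}$, because the first-layer null-space projection now removes part of the desired elevation energy. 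In this sense the theorem is the cell-edge counterpart of Theorem \ref{th:CellCenter}: the same limiting argument, but with the residual signal loss quantified rather than shown to vanish.

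Concretely, I would first invoke the asymptotic orthogonality used for Theorem \ref{th:CellCenter} to argue that, as $N_\mathrm{V},N_\mathrm{H}\to\infty$, the off-diagonal Gram blocks satisfy $\bG_{c,(k,m)}\to\boldsymbol{0}$ for every $m\neq k$ (the azimuth factor $\bU_{c c k}^{\mathrm{A}*}\bU_{c c m}^\mathrm{A}$ or the projected elevation factor $\overline{\bU}_{c c k}^{\mathrm{E}*}\overline{\bU}_{c c m}^\mathrm{E}$ tends to zero whenever the users are separated in that domain, which is exactly the ``or'' in the hypothesis). Feeding this into the matrix inside the inverse in Lemma \ref{lem:Ach_Rate}, and using that the $k$-th column of $\bW_c$ is supported only on the $k$-th block, that matrix becomes block-diagonal in the limit; by continuity of matrix inversion at an invertible positive-definite limit, its $(k,k)$ inverse entry converges to the reciprocal of the surviving diagonal term. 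The per-user rate then decouples and is governed asymptotically by the single desired-signal quantity $\overline{\bw}_{c c k}^*\bG_{c,(k,k)}\overline{\bw}_{c c k}$.

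The remaining step is to lower bound this term. Here I would use the exact Kronecker factorization
\begin{equation}
\bG_{c,(k,k)}=\overline{\bU}_{c c k}^*\overline{\bU}_{c c k}=\left(\bU_{c c k}^{\mathrm{A}*}\bU_{c c k}^\mathrm{A}\right)\otimes\left(\overline{\bU}_{c c k}^{\mathrm{E}*}\overline{\bU}_{c c k}^\mathrm{E}\right)=\bI_{\rA}\otimes\left(\overline{\bU}_{c c k}^{\mathrm{E}*}\overline{\bU}_{c c k}^\mathrm{E}\right),
\end{equation}
where $\bU_{c c k}^{\mathrm{A}*}\bU_{c c k}^\mathrm{A}=\bI_{\rA}$ because the azimuth eigenvectors are orthonormal. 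Since the minimum eigenvalue of a Kronecker product with an identity factor is the minimum eigenvalue of the other factor, $\lambda_{\min}(\bG_{c,(k,k)})=\lambda_{\min}(\overline{\bU}_{c c k}^{\mathrm{E}*}\overline{\bU}_{c c k}^\mathrm{E})=\sigma^2_{\mathrm{min}}(\overline{\bU}_{c c k}^\mathrm{E})$. A Rayleigh-quotient bound then gives $\overline{\bw}_{c c k}^*\bG_{c,(k,k)}\overline{\bw}_{c c k}\ge\sigma^2_{\mathrm{min}}(\overline{\bU}_{c c k}^\mathrm{E})\,\|\overline{\bw}_{c c k}\|^2$, and substituting this into the decoupled rate yields the claimed bound. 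As a consistency check, the bound itself is distance-agnostic given angular separation; the cell-edge hypothesis $d_\mathrm{max}\le d_{c k}\le r_\mathrm{cell}$ merely identifies the regime where the overlap between the desired elevation subspace and the interference subspace is nonempty, so that $\sigma_{\mathrm{min}}(\overline{\bU}_{c c k}^\mathrm{E})<1$. In the complementary regime $d_{c k}\le d_\mathrm{max}$ the projected elevation matrix is asymptotically an isometry, $\sigma_{\mathrm{min}}(\overline{\bU}_{c c k}^\mathrm{E})\to 1$, and the bound collapses to the single-user rate of Theorem \ref{th:CellCenter}.

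I expect the main obstacle to be the diagonalization step, not the algebra. One must show that the cross Gram blocks vanish \emph{uniformly} enough that the inverse and its $(k,k)$ entry genuinely converge, and that the projected elevation eigenmatrix $\overline{\bU}_{c c k}^\mathrm{E}$ (hence its smallest singular value) stabilizes as $N_\mathrm{V}$ grows, so that the $\sigma^2_{\mathrm{min}}(\overline{\bU}_{c c k}^\mathrm{E})$ on the right-hand side can be read as a well-defined limiting quantity. Both facts rest on the one-ring steering-vector asymptotics already developed for Theorem \ref{th:CellCenter}; once they are in hand, the minimum-singular-value bound is routine.
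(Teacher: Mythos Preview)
Your proposal is correct and follows essentially the same route as the paper's own proof. Both arguments (i) invoke the asymptotic orthogonality from Theorem~\ref{th:CellCenter} to kill the off-diagonal blocks $\bG_{c,(k,m)}$ and reduce the rate to the single quadratic form $\overline{\bw}_{c c k}^*\,\bG_{c,(k,k)}\,\overline{\bw}_{c c k}$, then (ii) apply the Rayleigh--Ritz inequality together with the Kronecker factorization $\bG_{c,(k,k)}=\bI\otimes(\overline{\bU}_{c c k}^{\mathrm{E}*}\overline{\bU}_{c c k}^{\mathrm{E}})$ to extract the factor $\sigma_{\min}^2(\overline{\bU}_{c c k}^{\mathrm{E}})$; the only cosmetic difference is that the paper phrases the diagonalization via the matrix inversion lemma and the block structure of $\bW_c$, whereas you phrase it as continuity of the inverse at a block-diagonal limit.
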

\begin{proof}
	See Appendix \ref{app:CellEdge}
\end{proof}

Theorem \ref{th:CellEdge} indicates that cell edge users experience some degradation in their $\mathsf{SNRs}$ as a cost for the perfect inter-cell interference avoidance. In \sref{sec:Discuss}, we will discuss some solutions that make compromises between the degradation of the desired signal power and the management of the inter-cell interference for cell-edge users, under the multi-layer precoding framework.

\subsection{Performance with Single-Path Channel Models} \label{subsec:Physical}
Rank-1 channel models describe the cases where the signal propagation through the channel is dominated by one line-of-sight (LOS) or non-LOS (NLOS) path. This is particularly relevant to systems with sparse channels, such as mmWave systems \cite{Bai2014,Rappaport2013a,Hur2016}. A special case of rank-1 channel models is the single-path channels. Consider a user $k$ at cell $c$ with a single path channel, defined by its azimuth and elevation angles $\phi_{c k}, \theta_{c k}$. Then, the channel vector can be expressed as
\begin{equation} \label{eq:Rank1_CH}
\bh_{c c k}=\rho^{\frac{1}{2}}_{cc  k} \ \beta_{c k} \ \ba_\rm{A}\left(\phi_{c k}, \theta_{c k}\right) \otimes \ba_\rm{E}\left(\phi_{c k}, \theta_{c k}\right), 
\end{equation} 
where $\ba_\rm{A}\left(\phi_{c k}, \theta_{c k}\right)$ and $\ba_\rm{E}\left(\phi_{c k}, \theta_{c k}\right)$ are the azimuth and elevation array response vectors, $\beta_{c k}$ is the complex path gain, and $\rho_{c c k}$ is its path loss.

In the next corollary, we characterize the achievable rate of the proposed multi-layer precoding design for single-path channels. 

\begin{corollary} \label{cor:Rank1}
	Consider the full-dimensional cellular system model in \sref{sec:SysModel}, and the single-path channel model in \eqref{eq:Rank1_CH}. When applying the multi-layer precoding algorithm in \sref{sec:Algorithm}, the achievable rate of user $k$ at cell $c$ satisfies
	\begin{equation}
	\lim_{N_\rm{V}, N_\rm{H} \rightarrow \infty} R_{c k}= \overline{R}_{c k} = \log_2\left(1+ \mathsf{SNR}  \left\|\bh_{ c c k}\right\|^2 \right).
	\end{equation}
\end{corollary}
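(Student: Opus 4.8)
The plan is to reduce everything to the two sufficient conditions stated right after Lemma~\ref{lem:Ach_Rate} and then verify them for the single-path model using the asymptotic orthogonality of large uniform planar array responses. The first observation is that the channel in \eqref{eq:Rank1_CH} has rank-one azimuth and elevation covariances, so $\rA=\rE=1$: the azimuth eigenvector is the normalized response $\hat{\ba}_\mathrm{A}=\ba_\mathrm{A}(\phi_{ck},\theta_{ck})/\|\ba_\mathrm{A}\|$ and the elevation eigenvector is $\hat{\ba}_\mathrm{E}=\ba_\mathrm{E}(\phi_{ck},\theta_{ck})/\|\ba_\mathrm{E}\|$. Hence $\overline{\bU}_{cck}$ is a single column and every $\bG_{c,(k,m)}$ is a scalar, so the conditions $\bG_{c,(k,m)}=\mathbf{0}$ ($m\neq k$) and $\bG_{c,(k,k)}=\bI$ become two scalar limits. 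Equivalently, one may read the single-path model as the $\Delta_\mathrm{A},\Delta_\mathrm{E}\to 0$ degeneration of the one-ring model, in which case the separation hypothesis of Theorem~\ref{th:CellCenter} collapses to ``distinct angles'' and $d_\mathrm{max}=H_\mathrm{BS}\tan(\arctan(r_\mathrm{cell}/H_\mathrm{BS}))=r_\mathrm{cell}$, so the theorem would already cover every user up to the cell edge; I would still give the short direct argument below.

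I would then establish the scalar limits. For the azimuth factor, ${\bU^\mathrm{A}_{cck}}^*\bU^\mathrm{A}_{ccm}=\hat{\ba}_\mathrm{A}(\phi_{ck})^*\hat{\ba}_\mathrm{A}(\phi_{cm})$ is a normalized Dirichlet-kernel sum over the $N_\mathrm{H}$ horizontal elements, tending to $0$ as $N_\mathrm{H}\to\infty$ whenever the azimuth spatial frequencies differ and equal to $1$ when $m=k$; the analogous statement holds for the elevation responses over the $N_\mathrm{V}$ vertical elements. For the first-layer factor I would show that the desired elevation direction is asymptotically captured by the interference null-space: since $\bR_c^\mathrm{I}$ in \eqref{eq:Interference} is an average of rank-one out-of-cell elevation covariances, its range is spanned by the out-of-cell elevation responses, each asymptotically orthogonal to $\hat{\ba}_\mathrm{E}(\theta_{cck})$ by the same decay; hence the projection of $\hat{\ba}_\mathrm{E}(\theta_{cck})$ onto $\mathrm{span}(\bUN_c)$ retains unit norm, giving $\bG_{c,(k,k)}=\|\bUN_c^*\hat{\ba}_\mathrm{E}(\theta_{cck})\|^2\to 1$, which is condition~(ii).

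For condition~(i) I would use the factorization $\bG_{c,(k,m)}=\big({\bU^\mathrm{A}_{cck}}^*\bU^\mathrm{A}_{ccm}\big)\big(\hat{\ba}_\mathrm{E}(\theta_{cck})^*\bUN_c\bUN_c^*\hat{\ba}_\mathrm{E}(\theta_{ccm})\big)$ into an azimuth scalar and a null-space-projected elevation scalar. Two distinct users necessarily differ in azimuth or in elevation angle, so it suffices that each factor vanish under the corresponding mismatch. If the azimuth angles differ, the first factor is a Dirichlet sum tending to $0$. If the elevation angles differ, I would write $\bUN_c\bUN_c^*=\bI-\bU_c^\mathrm{I}{\bU_c^\mathrm{I}}^*$, so the second factor equals $\hat{\ba}_\mathrm{E}(\theta_{cck})^*\hat{\ba}_\mathrm{E}(\theta_{ccm})$ minus a cross term; the former vanishes by Dirichlet decay and the latter because each in-cell elevation direction has asymptotically vanishing projection onto $\mathrm{Range}(\bR_c^\mathrm{I})$. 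Thus $\bG_{c,(k,m)}\to 0$ with no explicit separation hypothesis, the $2\Delta$ gap of Theorem~\ref{th:CellCenter} having collapsed to mere angular distinctness because the single-path spread is zero. Combining with condition~(ii) gives ${\bF_c^{(2)}}^*\bF_c^{(2)}\to\bI$, so Lemma~\ref{lem:Ach_Rate} yields $R_{ck}\to\log_2(1+\mathsf{SNR}\,\|\overline{\bw}_{cck}\|^2)$, and $\|\overline{\bw}_{cck}\|^2=\|\bh_{cck}\|^2$ since $\bU^\mathrm{A}_{cck}\otimes\bU^\mathrm{E}_{cck}$ has orthonormal columns, giving the claimed limit.

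\textbf{Main obstacle.} The delicate step is the first-layer null-space argument, i.e. showing the desired signal survives the inter-cell nulling so that $\bG_{c,(k,k)}\to 1$. One must control, uniformly, the projection energy of $\hat{\ba}_\mathrm{E}(\theta_{cck})$ onto $\mathrm{Range}(\bR_c^\mathrm{I})$, which is generated by the averaged out-of-cell directions $\{\theta_{cbk}\}$. This relies on the geometry keeping those interfering elevation angles separated from $\theta_{cck}$ (out-of-cell users lie farther away and hence at lower elevation, with $d_{ck}\le r_\mathrm{cell}$), and it requires handling the expectation over the scheduled sets $\cK_b$ so that the total, possibly continuum, projection energy vanishes rather than merely each individual inner product. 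Making this uniform control rigorous is the crux of the proof.
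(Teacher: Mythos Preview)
Your proposal is correct and follows essentially the same route the paper intends: the paper's proof of Corollary~\ref{cor:Rank1} is literally ``similar to that of Theorem~\ref{th:CellCenter}, and is omitted,'' and your argument is precisely that specialization---reduce to conditions (i) and (ii) after Lemma~\ref{lem:Ach_Rate}, observe that the single-path model is the $\Delta_\mathrm{A},\Delta_\mathrm{E}\to 0$ degeneration of the one-ring model so that $d_\mathrm{max}=r_\mathrm{cell}$ and the angular-separation hypotheses collapse to mere distinctness, and invoke the asymptotic orthogonality of array responses (the paper's Lemma~\ref{lem:NullRange} in the limit of zero spread). The ``main obstacle'' you flag---uniform control of the projection onto $\mathrm{Range}(\bR_c^\mathrm{I})$ when the averaged interference support abuts the cell-edge elevation angle---is a legitimate technical caveat that the paper's proof of Theorem~\ref{th:CellCenter} also glosses over, so you are not missing anything relative to the paper.
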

\begin{proof}
	The proof is similar to that of Theorem \ref{th:CellCenter}, and is omitted due to space limitations.
\end{proof} 

Corollary \ref{cor:Rank1} indicates that the proposed multi-layer precoding design can achieve an optimal performance for single-path channels, making it a promising solution for mmWave and low channel rank massive MIMO systems. This will also be verified by numerical simulations in \sref{sec:Results}.

\section{Discussion and Extensions} \label{sec:Discuss}

While we proposed and analyzed a specific multi-layer precoding design in this paper, there are many possible extensions as well as important topics that need further investigations. In this section, we briefly discuss some of these points, leaving their extensive study for future work. 

\subsection{Multi-Layer Precoding with Augmented Vertical Dimensions} \label{subsec:Aug}

As explained in \sref{sec:Algorithm}, the proposed multi-layer precoding design attempts to perfectly avoid the inter-cell interference by forcing its transmission to be in the elevation null-space of the interference. While this guarantees optimal performance for cell-interior users and decontaminates the pilots for all the cell users, it may also block some of the desired signal power at the cell-edge. In this section, we propose a modified design for the first precoding layer $\bF_c^{(1)}$ that compromises between the inter-cell interference avoidance and the desired signal degradation. The main idea of the proposed design, that we call multi-layer precoding with augmented vertical dimensions, is to simply extend the null-space of the inter-cell interference via exploiting the structure of large channels. This is summarized as follows. Leveraging Lemma 2 in \cite{Yin2013}, the rank of the one-ring correlation matrix can be related to its angular range $\left[\theta_\rm{min}, \theta_\rm{max}\right]$ as 
\begin{equation}
\text{rank}\left(\bR\right)=\frac{N D}{\lambda} \left(\cos(\theta_\rm{min})-\cos(\theta_\rm{max})\right) \ \text{as} \ N \rightarrow \infty.
\end{equation}
\noindent Applying this lemma to the elevation inter-cell interference subspace, setting $\theta_\rm{min}=\pi/2$, BS $c$ can estimate its maximum interference elevation angle, denoted $\theta_c^\rm{I}$, as
\begin{equation}
\theta_c^\rm{I}=\arccos\left(-\frac{\text{rank}\left(\bR_c^\rm{I}\right) \lambda}{N_\rm{V} D}\right).
\end{equation}

\noindent Extending the null space of the interference can then be done by virtually reducing the inter-cell interference subspace. Let $\delta_\rm{E}$ denote the angular range of the extended subspace.  The modified inter-cell interference covariance can then be calculated as 
\begin{equation}
\left[\overline{\bR}_c^\rm{I}\right]_{n_1,n_2}=\frac{1}{\theta_c^\rm{I}-\delta_\rm{E}-\pi/2} \int_{\frac{\pi}{2}}^{\theta_c^\rm{I}-\delta_\rm{E}} e^{j k D (n_2-n_1) \cos(\alpha)} d\alpha.
\end{equation} 

\noindent Finally, if $\left[\overline{\bU}_{c}^\mathrm{I} \ \overline{\bU}_{c}^\mathrm{NI}\right] \overline{\boldsymbol{\Lambda}}_c \left[\overline{\bU}_{c}^\mathrm{I} \overline{\bU}_{c}^\mathrm{NI}\right]^*$ represents the eigen-decomposition of $\overline{\bR}_c^\rm{I}$, with $\overline{\bU}_{c}^\mathrm{I}$ and $\overline{\bU}_{c}^\mathrm{NI}$ correspond to the non-zero and zero eigenvalues, then the modified first precoding layer can be constructed as
\begin{equation}
\bF_{c}^{(1)}= \bI \otimes \overline{\bU}_c^\rm{NI}.
\end{equation}

Note that under this multi-layer precoding design, only cell edge users will experience inter-cell interference and pilot contamination while optimal performance is still guaranteed for cell-interior users. This yields an advantage for multi-layer precoding over conventional massive MIMO precoding schemes, which will also be illustrated by numerical simulations in \sref{sec:Results}.

\subsection{TDD and FDD Operation with Multi-Layer Precoding}

While we focused on TDD systems in this paper, the fact that multi-layer precoding relies on channel covariance knowledge makes it attractive for FDD operation as well. In FDD systems, the adjacent cells will cooperate to construct the elevation inter-cell interference subspace, which is needed to build the first precoding layer. Since this channel knowledge is of very large-scale statistics and this precoding layer needs to be updated every long time period, this cooperation overhead can be reasonably low. Given the first layer spatial filtering, every BS can estimate its users covariance knowledge free of inter-cell interference. Thanks to the multiplicative structure of the multi-layer precoding and its successive dimensionality reduction, only the third precoding layer requires the instantaneous knowledge of the effective channel, which has much smaller dimensions. It is worth noting here that other FDD massive MIMO precoding schemes, such as JSDM \cite{Adhikary2013} with its user grouping functions, can be easily integrated into the proposed multi-layer precoding framework for full-dimensional massive MIMO cellular systems.

In TDD systems, the required channel knowledge for the three stages can be done through uplink training on different time scales. One important note is that the second precoding layer (and its channel training) may not be needed in TDD systems with fully-digital transceivers, as the instantaneous channels can be easily trained in the uplink with a small number of pilots. This precoding layer, however, is important if multi-layer precoding is implemented using hybrid architectures, as will be shown in the following subsection.


\subsection{Multi-Layer Precoding using Hybrid Architectures} 

\begin{figure}[t]	
	\centering
	\includegraphics[width=.5\columnwidth]{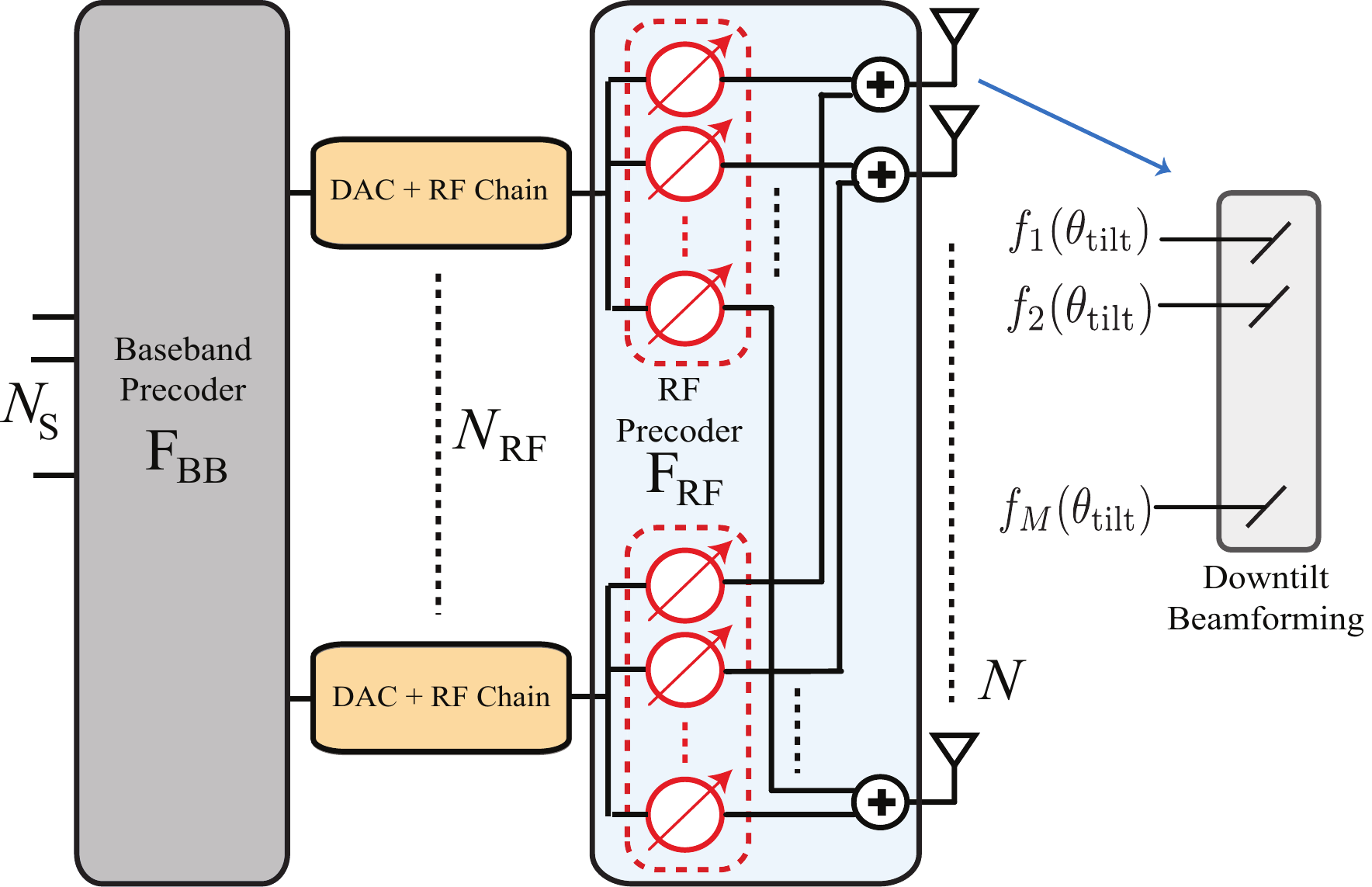}
	\caption{The figure shows a hybrid analog/digital architecture, at which baseband precoding, RF precoding, and antenna downtilt beamforming can be utilized to implement the multi-layer precoding algorithm.}
	\label{fig:Beam}
\end{figure}

Thanks to the multiplicative structure and the specific multi-layer precoding design in \sref{sec:Algorithm}, we note that each precoding layer has less dimensions compared to the prior layers. This allows the multi-layer precoding matrices to be implemented using hybrid analog/digital architectures\cite{HeathJr2016,ElAyach2014,Alkhateeb2014b,Han2015}, which reduces the required number of RF chains. In this section, we briefly highlight one possible idea for the hybrid analog/digital implementation, leaving its optimization and extensive investigation for future work.

Considering the three-stage multi-layer precoding design in \sref{sec:Algorithm}, we propose to implement the first and second layers in the RF domain and perform the third layer precoding at baseband, as depicted in \figref{fig:Beam}. Given the successive dimensional reductions, the required number of RF chains is expected to be much less than the number of antennas, especially in sparse and low-rank channels. As the first precoding layer focuses on avoiding the inter-cell interference in the elevation direction, we can implement it using downtilt directional antenna patterns. We assume that each antenna port has a directional pattern and electrically adjusted downtilt angle \cite{Kammoun2014, Seifi2014a}. For example, the 3GPP antenna port elevation gain $G^\mathrm{E}\left(\theta\right)$ is defined as \cite{Kammoun2014}
\begin{equation}
G^\mathrm{E}(\theta)=G^\mathrm{E}_\mathrm{max}-\min\left\{12 \left(\frac{\theta-\theta_\mathrm{tilt}}{\theta_\mathrm{3dB}}\right)^2, \mathrm{SL}\right\},
\end{equation}
where $\theta_\mathrm{tilt}$ is the downtilt angle, and SL is the sidelobe level. Therefore, one way to approximate ${\bF_c^{\mathrm{E}}}^{(1)}$ is to adjust the downtilt angle $\theta_\mathrm{tilt}$ to minimize the leakage transmission outside the interference null-space $\bU_c^\mathrm{NI}$.

Once $\bF_c^{(1)}$ is implemented, the second precoding layer $\bF_c^{(2)}$ can be designed similar to \cite{Alkhateeb2014b}, i.e., each column of  $\bF_c^{(2)}$ can be approximated by a beamsteering vector taken from a codebook that captures the analog hardware constraints. Finally, the third precoding layer $\bF_c^{(3)}$ is implemented in the baseband to manage the multi-user interference based on the effective channels.

\section{Simulation Results} \label{sec:Results}

In this section, we evaluate the performance of the proposed multi-layer precoding algorithm using numerical simulations. We also draw insights into the impact of the different system and channel parameters. 

We consider a single-tier 7-cell cellular system model as depicted in \figref{fig:Cells}, and calculate the performance for the cell in the center. Unless otherwise mentioned, every BS is assumed to a have a UPA, oriented in the y-z plane, at a height $H_\rm{BS}=35$m, and serving users at cell radius $r_\rm{cell}=100$m. Users are randomly and uniformly dropped in the cells, and every cell randomly schedules $K=20$ users to be served at the same time and frequency slot. The BS transmit power is $P=5$ dB and the receiver noise figure is $7$ dB. The system operates at a carrier frequency $4$ GHz, with a bandwidth $10$ MHz, and a path loss exponent $3.5$. Two channel models are assumed, namely, the single-path and the one-ring channel models. 

The BSs in the adopted system apply the multi-layer precoding algorithm in \sref{sec:Algorithm}. The required channel knowledge is perfectly obtained from the geometry of the network, i.e., no actual channel estimation is applied. We assume a universal pilot reuse, i.e., all the cells randomly assign the same $K$ orthogonal pilots to its users. The channels of the users sharing the same pilots are added at every BS, which simulates the interference of the other cells' users in the channel estimation phase. In more detail, the channel estimation and multi-layer precoding are done as follows. First, the averaged interference covariance matrix $\bR_b^\mathrm{I}$ of every BS $b$ is constructed by averaging the elevation interference covariance over $40$ realizations of scheduled users, each has $20$ users/cell. Using this knowledge, the first-layer precoders are obtained according to \eqref{eq:First_layer}. Then, the effective channel covariance matrices $\overline{\bU}_{c c k}$ are calculated by applying the first precoding layer on the sum of the channel covariance matrices of the users that share the same pilots from the different cells. Note that the first-layer precoders act as spatial filters that reduce (and ideally eleminate) the contributions of the other cells in the sum of the channel covariance. The second-layer precoders are then obtained following \eqref{eq:second_layer}. The effective channels are similarly calculated, applying the first and second precoding layers, from which the third-layer precoders are constructed using \eqref{eq:third_layer}. For the other precoding schemes we compare with, the channels are similarly constructed using the geometry and by adding the co-pilot user channels. Next, we present the simulation results for the two adopted channel models.
 
\begin{figure}[t]
	\centering
	\subfigure[center][{FD massive MIMO cellular model}]{
		\raisebox{.25\height}{\includegraphics[width=.4\columnwidth]{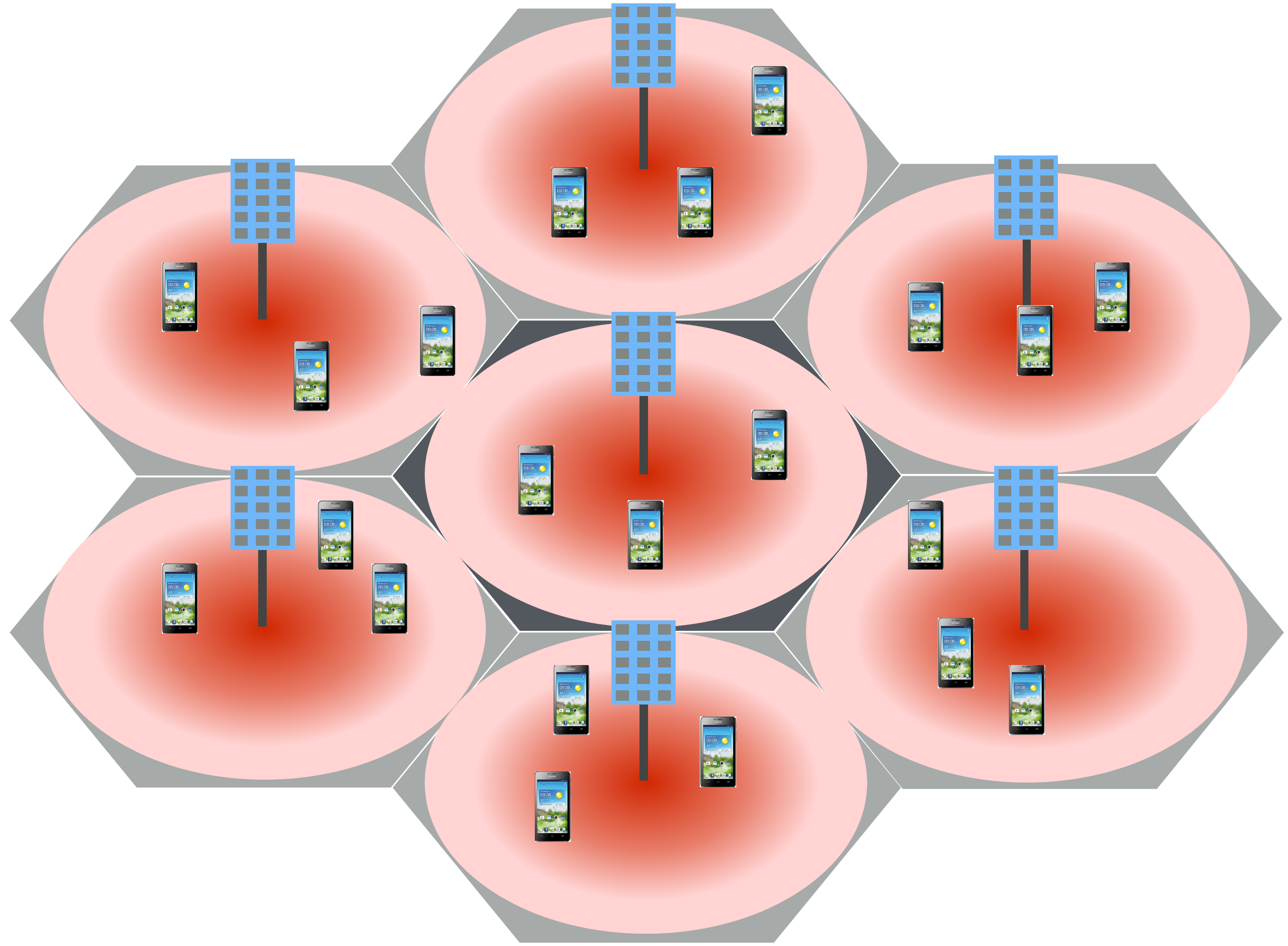}}
		\label{fig:Cells}}
	\subfigure[center][{$N_\rm{H}=30$}]{
		\includegraphics[width=.55\columnwidth]{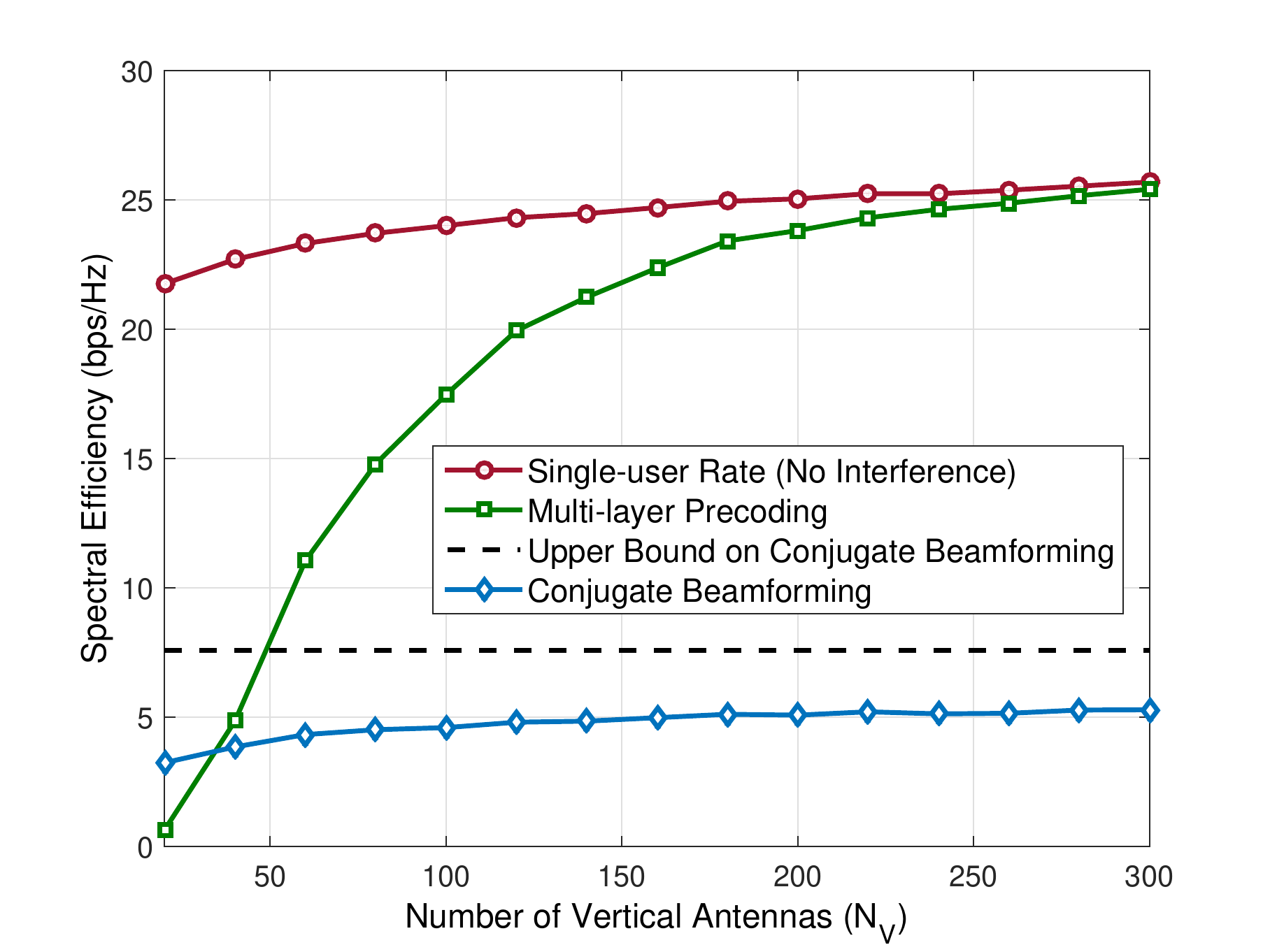}
		\label{fig:Ant_Rank1}}
	\caption{The adopted single-tier (7-cells) cellular model with FD massive MIMO antennas at the BSs is illustrated in (a). In (b), the achievable rate of the proposed multi-layer precoding is compared to the single-user rate and the rate with conventional conjugate beamforming, for different numbers of vertical antennas. The number of BS horizontal antennas is $N_\rm{H}=30$, and the users are assumed to have single-path channels.}
\end{figure}

\subsection{Results with Single-Path Channels}

In this section, we adopt a single-path model for the user channels as described in \eqref{eq:Rank1_CH}. The azimuth and elevation angles are geometrically determined based on users' locations relative to the BSs, and the complex path gains $\beta_{c k} \sim \mathcal{CN} \left(0,1\right)$. 

\begin{figure}[t]
	\centering
	\subfigure[center][{$r_\rm{cell}=200$m}]{
		\includegraphics[width=.482\columnwidth,height=175pt]{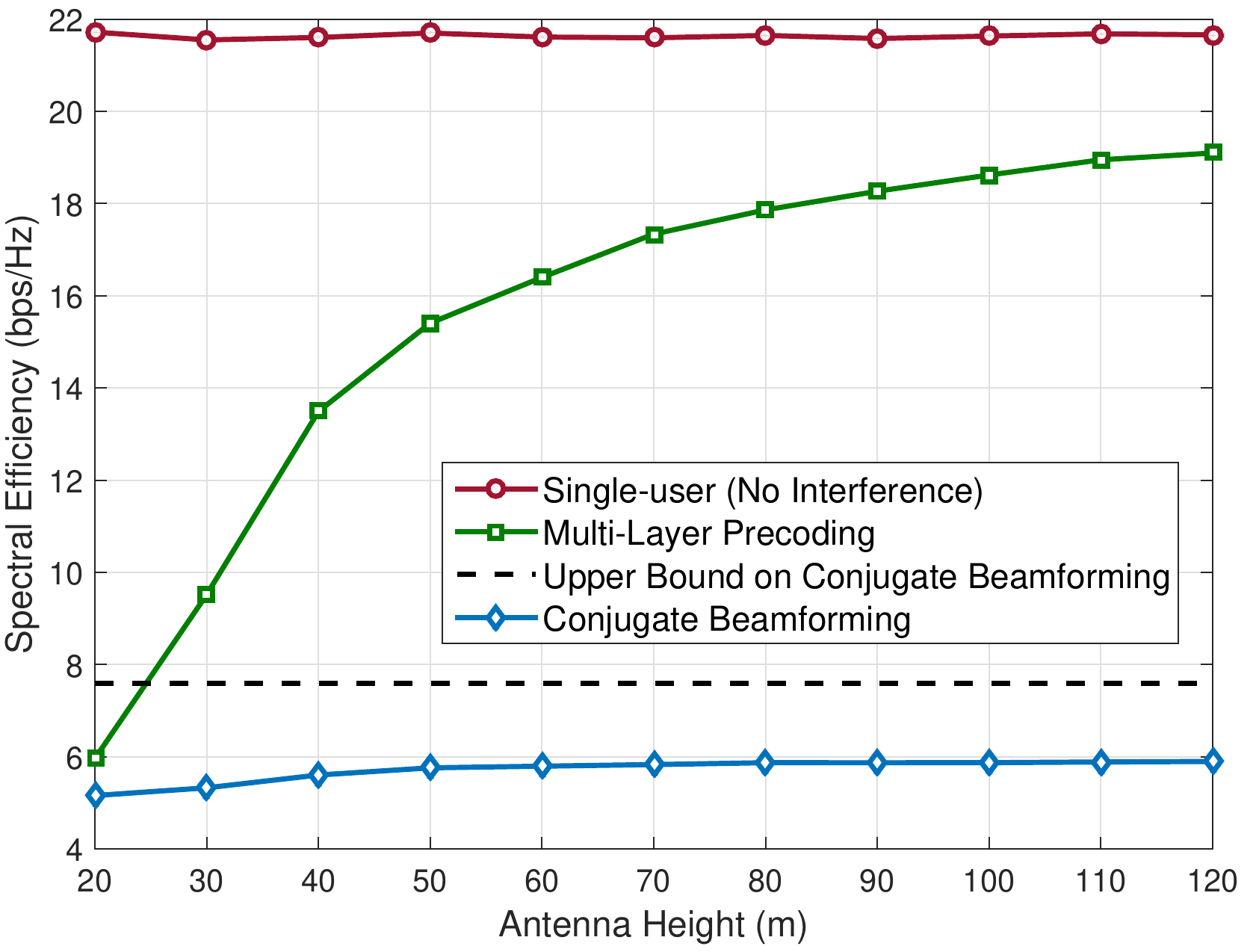}
		\label{fig:Imact_Hei}}
	\subfigure[center][{$H_\rm{BS}=35$m}]{
		\includegraphics[width=.482\columnwidth, height=175pt]{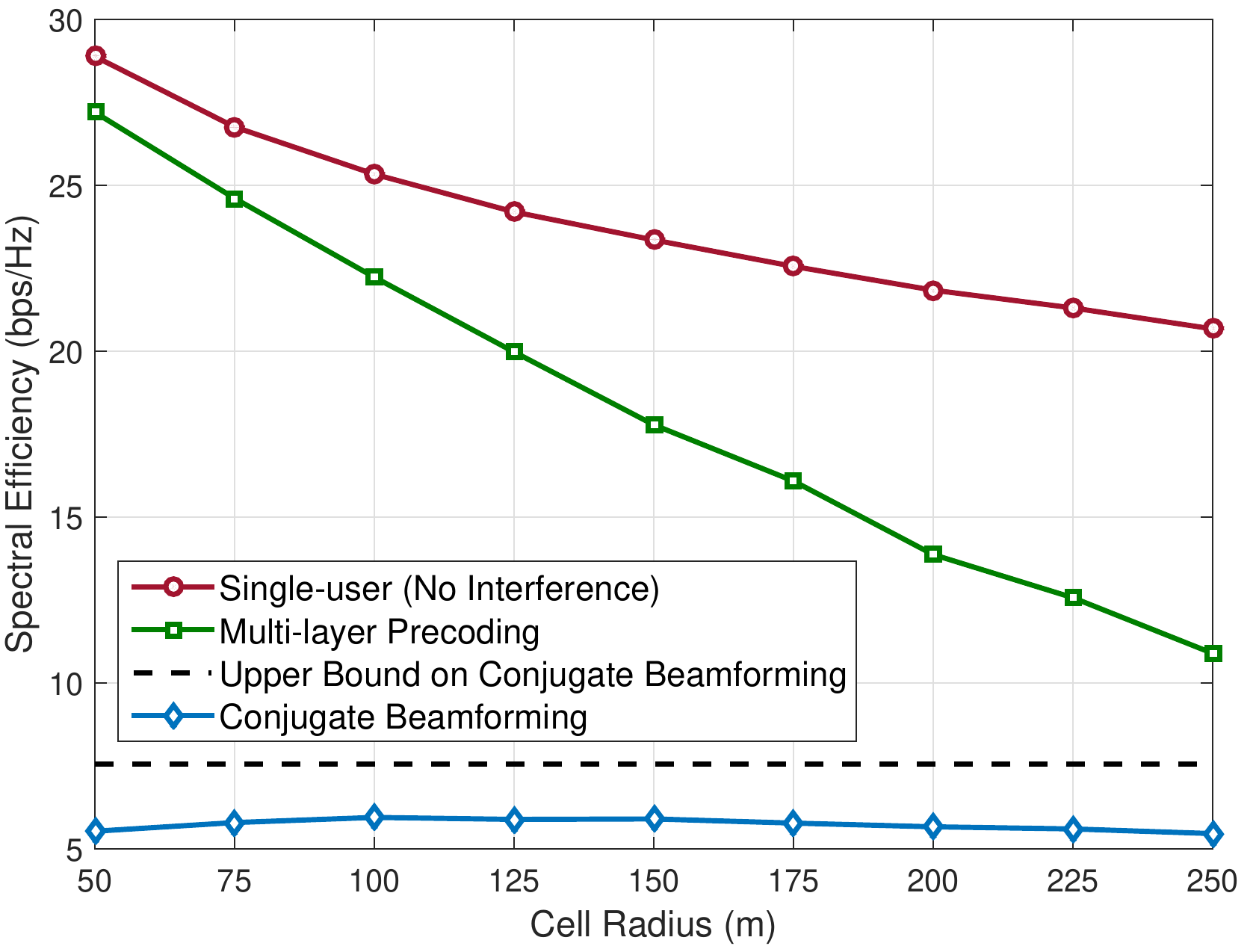}}
	\label{fig:Imact_Red}
	\caption{The achievable rate of the proposed multi-layer precoding is compared to the single-user rate and the rate with conventional conjugate beamforming for different cell radii. The BSs are assumed to employ $120 \times 30$ UPAs, and the users have single-path channels.}
	\label{fig:Imact_Rad_Hei}
\end{figure}

\textbf{Optimality with large antennas:} 
In \figref{fig:Ant_Rank1}, we compare the per-user achievable rate of multi-layer precoding with the single-user rate and the rate with conventional conjugate beamforming. The BSs are assumed to employ UPAs that have $N_\rm{H}=30$ horizontal antennas and different numbers of vertical antennas. First, we note that the per-user achievable rate with multi-layer precoding approaches the optimal single-user rate as the number of antennas grow large. This verifies the asymptotic optimality result of multi-layer precoding given in Corollary \ref{cor:Rank1}. Note that the single-user rate is the rate if only this user is served in the network, i.e., with no inter-cell or multi-user intra-cell interference. In the figure, we also plot the achievable rate with conventional conjugate beamforming. This assumes that channels are estimated using uplink training and then conjugate beamforming is applied in the downlink data transmission \cite{Marzetta2010}. As a function of the path-loss $\rho_{b c k}$ in \eqref{eq:Rank1_CH}, the conjugate beamforming rate is theoretically bounded from above by \cite{Marzetta2010}
\begin{equation}
\overline{R}_{c k}^{CB}= \log_2\left(1+\mathsf{SNR} \frac{\rho_{c c k}^2}{\sum_{b \neq c} \rho_{b c k}^2}\right),
\end{equation} 
which limits its rate from growing with the number of antennas beyond this value. Interestingly, the multi-layer precoding rate does not have a limit on its rate and can grow with the number of antennas and transmit power without a theoretical limit. The intuition behind that lies in the inter-cell interference avoidance using averaged channel covariance knowledge in multi-layer precoding. This works as a spatial filtering that avoids uplink channel estimation errors due to pilot reuse among cells and cancels inter-cell interference in the downlink data transmission. Therefore, the multi-layer precoding rate is free of the pilot-contamination impact. Note that while the asymptotic optimality of multi-layer precoding is realized at large antenna numbers, \figref{fig:Ant_Rank1} shows it can still achieve gain over conventional massive MIMO beamforming schemes at much lower number of antennas. 

\textbf{Impact of antenna heights and cell radii:}
\begin{figure}[t]	
	\centering
	\includegraphics[width=.7\columnwidth]{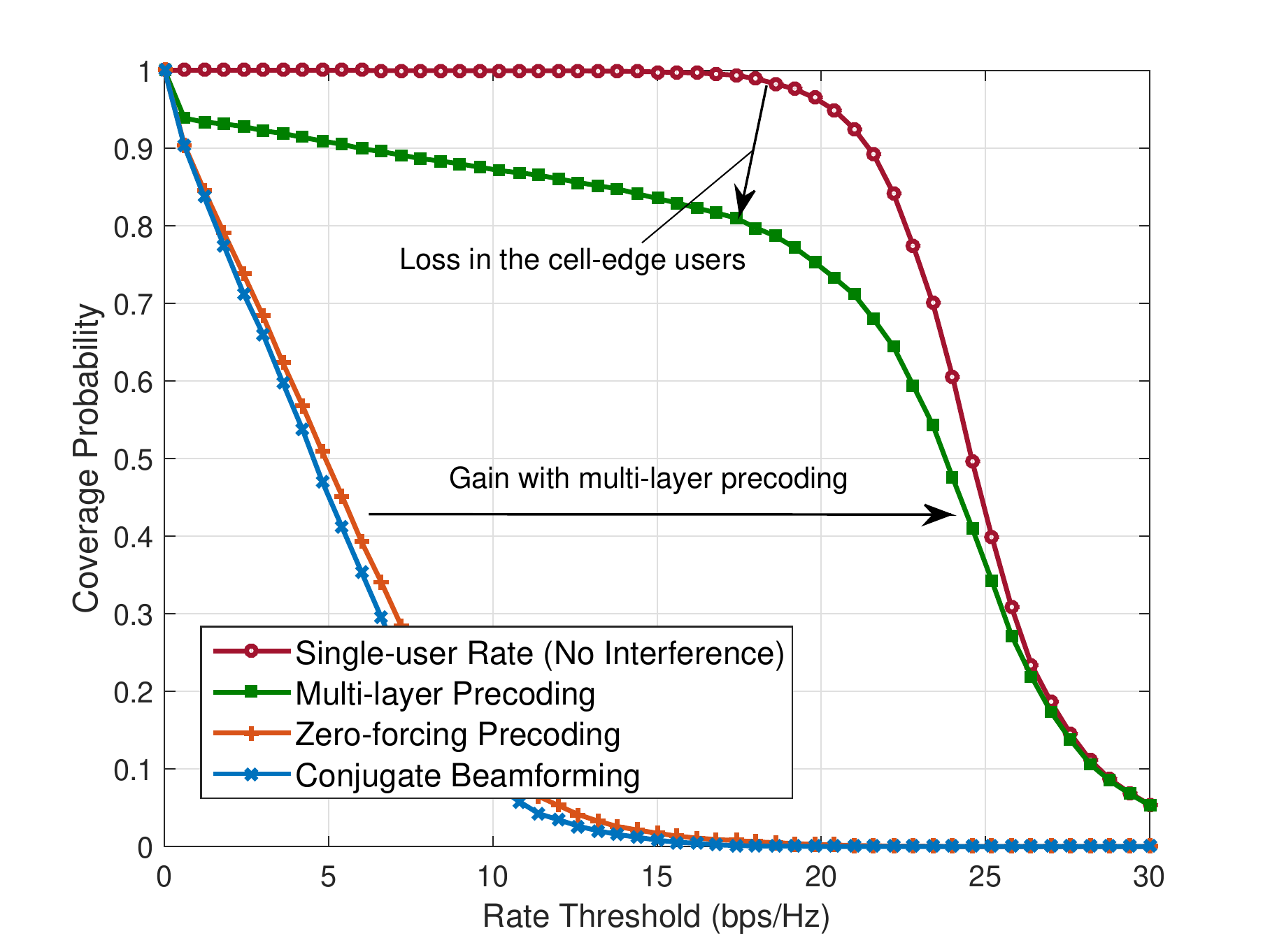}
	\caption{The rate coverage gain of the proposed multi-layer precoding algorithm over conventional conjugate beamforming and zero-forcing is illustrated. This rate coverage is also shown to be close to the single-user case. The BSs are assumed to employ $120 \times 30$ UPAs at heights $H_\rm{BS}=35$m, the cell radius is $r_\rm{cell}=100$m, and the users have single-path channels.}
	\label{fig:Cov_Rank1}
\end{figure}
In \figref{fig:Imact_Rad_Hei}, we evaluate the impact of the BS antenna height and cell radius on the achievable rates. This figure adopts the same system and channel assumptions as in \figref{fig:Ant_Rank1}. In \figref{fig:Imact_Rad_Hei}(a), the achievable rates for multi-layer precoding, single-user, and conjugate beamforming are compared for different antenna heights, assuming cells of radius $200$m. The figure shows that multi-layer precoding approaches single-user rates at higher antenna heights. This is intuitive because forcing the transmission to become in the elevation null-space of the interference may have less impact on the desired signal blockage if higher antennas are employed. Note that the convergence to the single-user rate is expected to happen at lower antenna heights when large arrays are deployed. These achievable rates are again compared in \figref{fig:Imact_Rad_Hei}(b), but for different cell radii. This figure illustrates that a higher cell radius generally leads to less rate because of the higher path loss. Further, the difference between single-user and multi-layer precoding rates increases at higher cell radii. In fact, this is similar to the degradation with smaller antenna heights, i.e., due to the impact of the inter-cell interference avoidance on the desired signal power. For reasonable antenna heights and cell radii, however, the multi-layer precoding still achieves good gain over conventional conjugate beamforming.


\textbf{Rate coverage:} To evaluate the rate coverage of multi-layer precoding, we plot \figref{fig:Cov_Rank1}. The same setup of \figref{fig:Ant_Rank1} is adopted again with cells of radius $100$m, and BSs with $120  \times 30$ UPAs at heights $35$m.  First, the figure shows that multi-layer precoding achieves very close coverage to the single-user case, especially for users not at the cell edge. For example, $\sim 60\%$ of the multi-layer precoding users get the same rate of the single-user case. At the cell edge, some degradation is experienced due to the first precoding layer that filters out-of-cell interference and affects the desired signal power. This loss, though, is expected to decrease as more antennas are employed. The figure also shows significant rate coverage gain over conventional conjugate beamforming and zero-forcing precoding solutions. 

\begin{figure}[t]	
	\centering
	\includegraphics[width=.63\columnwidth]{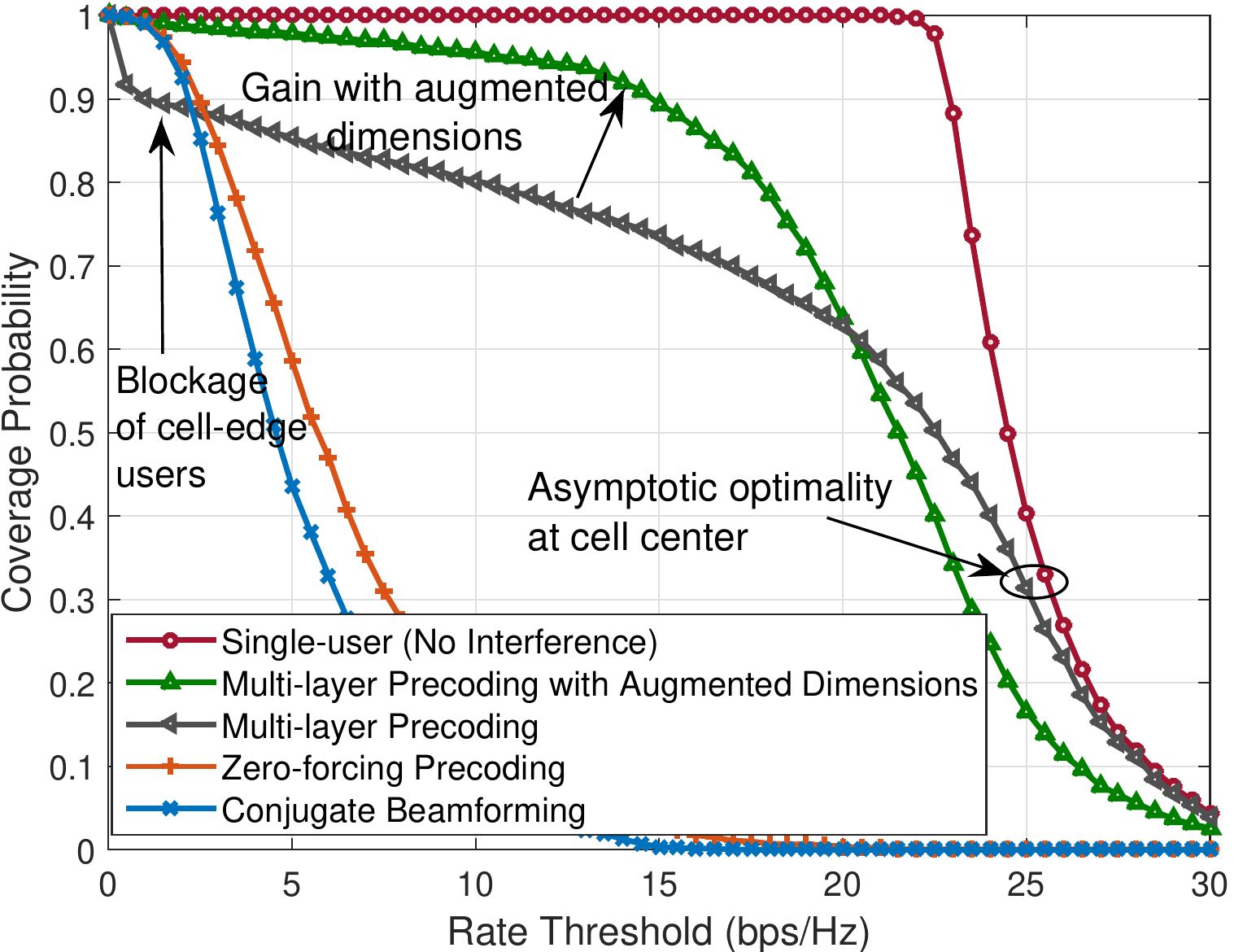}
	\caption{The rate coverage gain of the proposed multi-layer precoding algorithms over conventional conjugate beamforming and zero-forcing is illustrated. This rate coverage is also shown to be close to the single-user case. Further, the modified algorithm with augmented vertical dimensions can overcome the cell-edge blockage. The BSs are assumed to employ $100 \times 40$ UPAs at heights $H_\rm{BS}=35$m, the cell radius is $r_\rm{cell}=100$m. The users have one-ring channel models of azimuth and elevation angular spread $\Delta_\rm{A}=5^\circ, \Delta_\rm{E}=3^\circ$}
	\label{fig:Cov_OneRing_80}
\end{figure}

\begin{figure}[t]	
	\centering
	\includegraphics[width=.63\columnwidth]{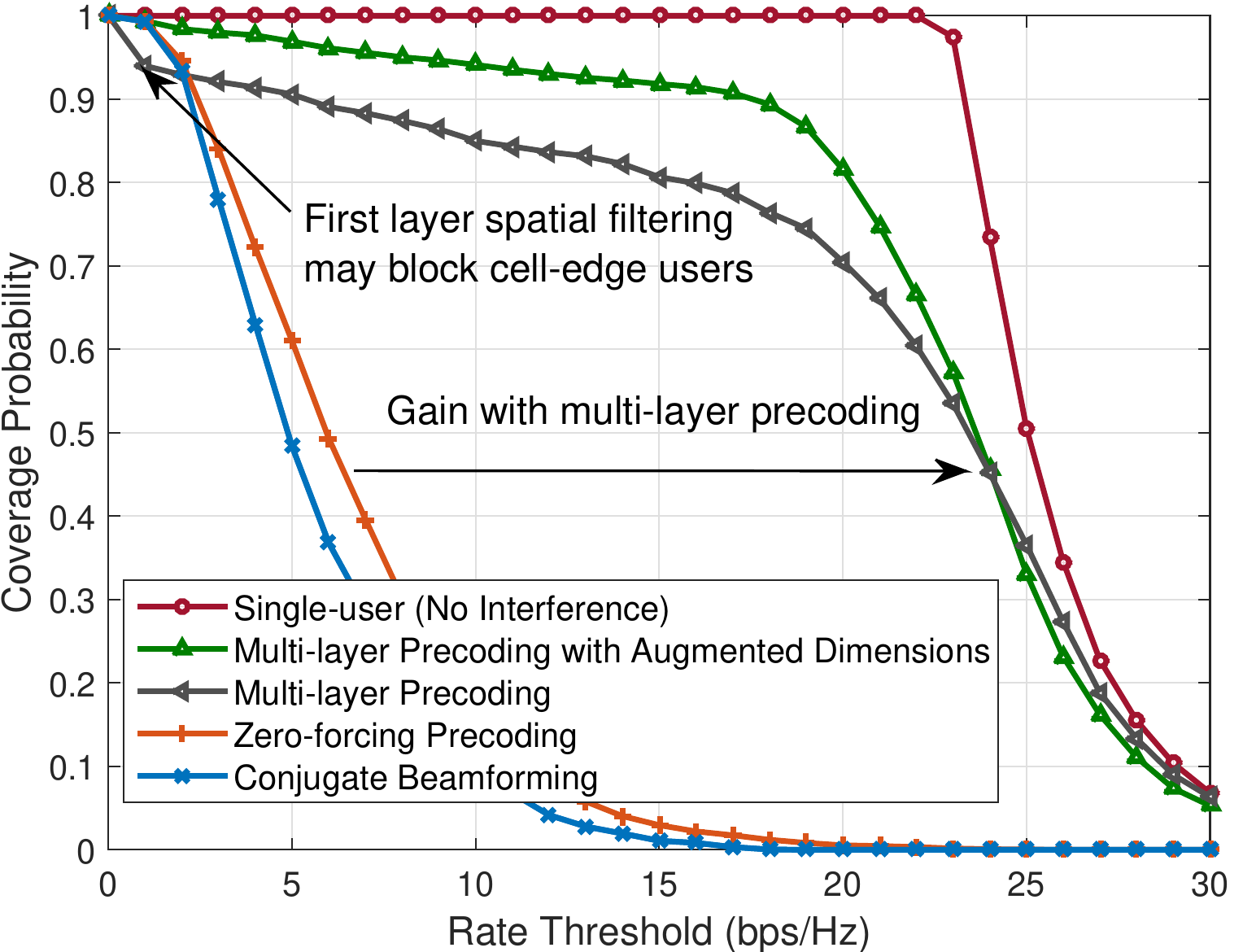}
	\caption{The rate coverage gain of the proposed multi-layer precoding algorithms over conventional conjugate beamforming and zero-forcing is illustrated. This rate coverage is also shown to be close to the single-user case. Further, the modified algorithm with augmented vertical dimensions can overcome the cell-edge blockage. The BSs are assumed to employ $140 \times 40$ UPAs at heights $H_\rm{BS}=35$m, the cell radius is $r_\rm{cell}=100$m.	The users have one-ring channel models of azimuth and elevation angular spread $\Delta_\rm{A}=5^\circ, \Delta_\rm{E}=3^\circ$.}
	\label{fig:Cov_OneRing_140}
\end{figure}

\subsection{Results with One-Ring Channels}
In this section, we adopt a one-ring model for the user channels as described in \eqref{eq:A_Cov}. The azimuth and elevation angles are geometrically determined based on users' locations relative to the BSs, and the angular spread is set to $\Delta_\rm{A}=5^\circ, \Delta_\rm{E}=3^\circ$. Every BS randomly selects $K=20$ users to be served, i.e., no scheduling is done to guarantee the angular separation condition in Theorem \ref{th:CellCenter} and Theorem \ref{th:CellEdge}.

\begin{figure}[t]	
	\centering
	\includegraphics[width=.7\columnwidth]{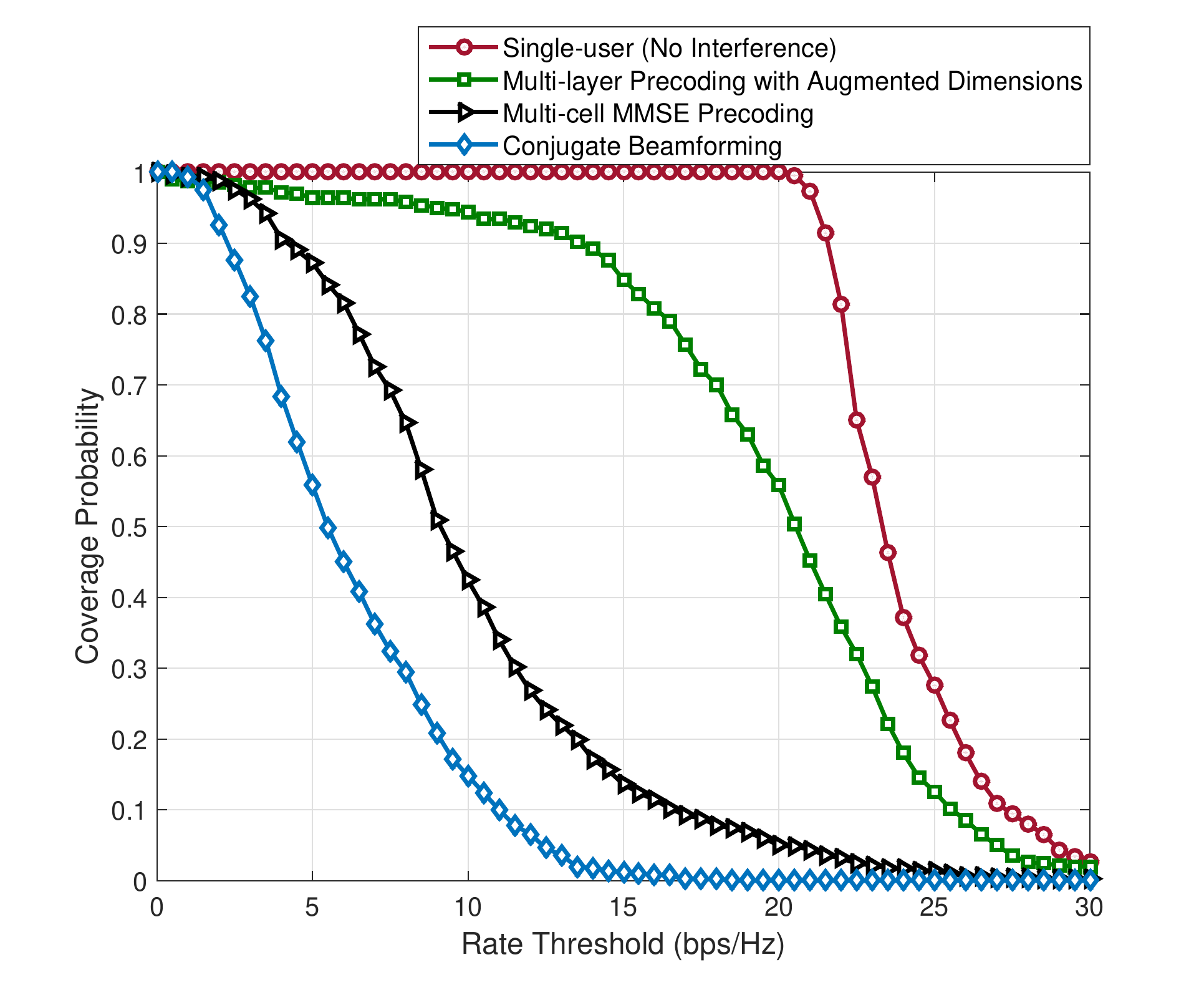}
	\caption{The rate coverage gain of the proposed multi-layer precoding algorithms over conventional single-cell conjugate beamforming and multi-cell MMSE precoding. This rate coverage is also shown to be close to the single-user case. The BSs are assumed to be at heights $H_\rm{BS}=35$m, the cell radius is $r_\rm{cell}=100$m. The users have one-ring channel models of azimuth and elevation angular spread $\Delta_\rm{A}=5^\circ, \Delta_\rm{E}=3^\circ$.}
	\label{fig:Comp_MMSE}
\end{figure}

\textbf{Rate coverage:} 
In \figref{fig:Cov_OneRing_80}-\figref{fig:Comp_MMSE}, we compare the rate coverage of multi-layer precoding, single-user, and conventional conjugate beamforming, for different antenna sizes. We also plot the rate coverage of the multi-layer precoding with augmented vertical dimensions described in \sref{subsec:Aug}, assuming an extended angle $\delta_\rm{E}=2 \Delta_\rm{E}$. This choice makes the maximum no-blockage distance $d_\rm{max}$, defined in Theorem \ref{th:CellCenter}, to be equal to the cell radius. Optimization of this parameter deserves more study in future extensions.
\figref{fig:Cov_OneRing_80} considers the system model in \sref{sec:Model} with $100 \times 40$ BS UPAs and one-ring channel model. First, the figure shows that multi-layer precoding achieves close coverage to the single-user case at the cell center. For the cell edge, though, multi-layer precoding  users experience high blockage, which results from the elevation inter-cell interference avoidance. This can be improved when augmenting vertical subspaces as described in \sref{subsec:Aug}. Different than the multi-layer precoding case, the small degradation at the cell-edge users is due to inter-cell interference, not signal blockage. Further, it is important to note that the cell-center users still achieve the asymptotic optimal rate with the modified algorithm in \sref{subsec:Aug}, i.e., no inter-cell interference  or pilot contamination impact exist. The same behavior is shown again in \figref{fig:Cov_OneRing_140}, when larger array sizes are employed. In this case, though, the cell-edge blockage with multi-layer precoding is less as a better separation between the desired cell and the other cells' users can be achieved.  In the two figures, multi-layer precoding with augmented vertical subspaces is shown to have a good coverage gain over conventional massive MIMO precoding solutions.

In \figref{fig:Comp_MMSE}, we consider the same system and channel models as in \figref{fig:Cov_OneRing_80}, but with $80 \times 20$ UPAs and $K=5$ users to reduce the computational complexity. \figref{fig:Comp_MMSE} compares the rate coverage of the proposed augmented dimension based multi-layer precoding with the single-user rate and the single-cell conjugate beamforming. The figure also plots the rate coverage of the multi-cell MMSE precoding in \cite{Jose2011} that manages the inter-cell interference. As shown in the figure, multi-layer precoding achieves a close performance to single-user rate and good gain over single-cell precoding. \figref{fig:Comp_MMSE} also illustrates that multi-layer precoding achieves a reasonable gain over multi-cell MMSE precoding despite the requirement of less channel knowledge. 


\begin{figure}[t]	
	\centering
	\includegraphics[width=.7\columnwidth]{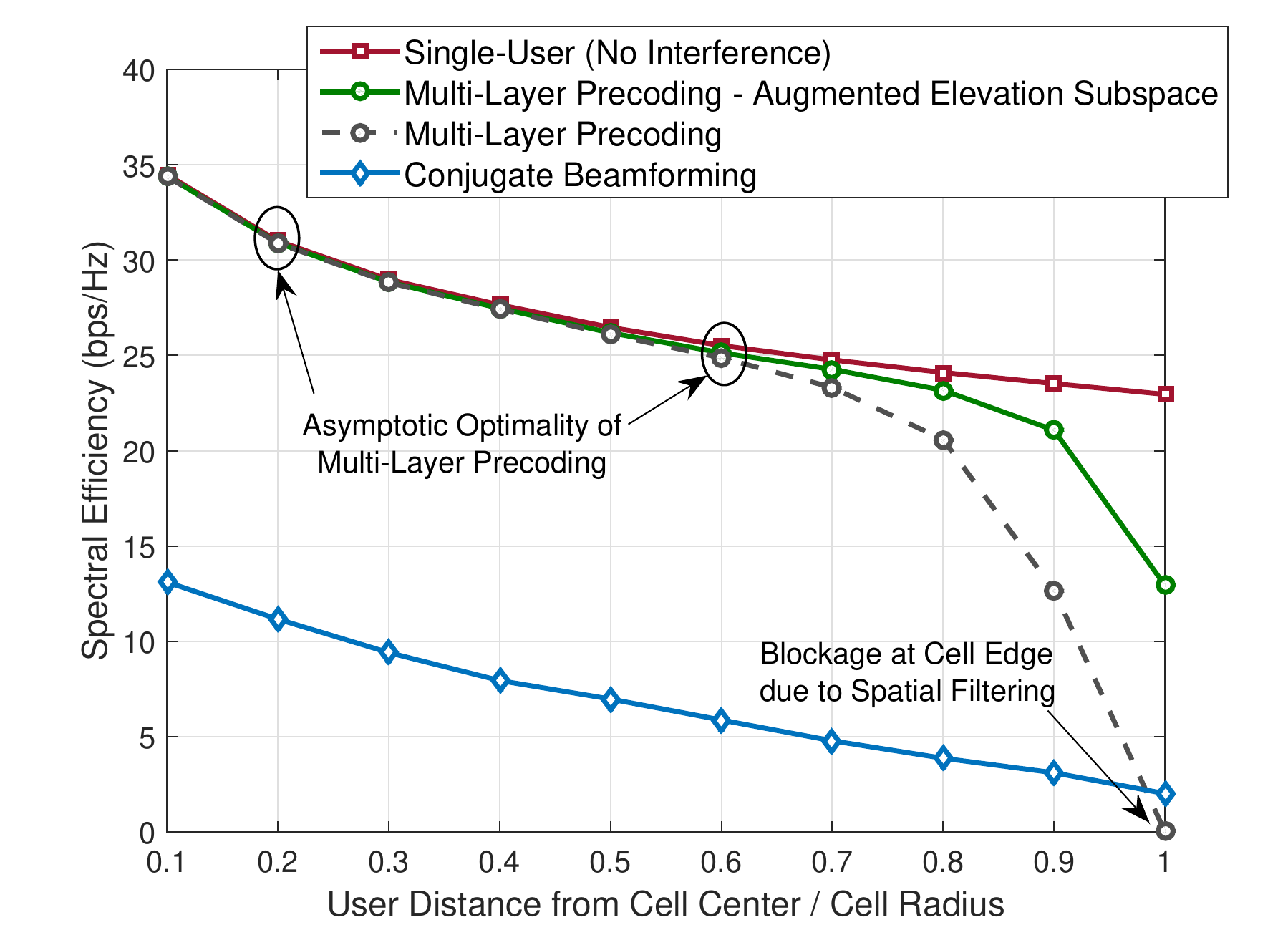}
	\caption{The achievable rates of the proposed multi-layer precoding algorithms are compared to the single-user rate and the rate with conventional conjugate beamforming, for different distances from cell center. The BSs are assumed to employ $120 \times 40$ UPAs at heights $H_\rm{BS}=35$m and the cell radius is $r_\rm{cell}=100$m. The users have one-ring channel models of azimuth and elevation angular spread $\Delta_\rm{A}=5^\circ, \Delta_\rm{E}=3^\circ$.}
	\label{fig:Dist}
\end{figure}

\textbf{Rates at the cell-interior and cell-edge:}
To illustrate the achievable rates for cell-interior and cell-edge users, we plot the achievable rates of multi-layer precoding, single-user, and conventional conjugate beamforming in \figref{fig:Dist}. The rates are plotted versus the user distance to the BS, normalized by the cell radius $r_\rm{cell}=100$m. The figure confirms the asymptotic optimal performance of multi-layer precoding at the cell-interior, given in Theorem \ref{th:CellCenter}. At the cell edge, users experience some blockage that can be fixed with the augmented vertical dimension modification in \sref{subsec:Aug}. Compared to the conventional conjugate beamforming performance, the multi-layer precoding with augmented vertical dimensions has better performance, even at the cell edge.

\section{Conclusion} \label{sec:Conclusion}
In this paper, we proposed a general precoding framework for full-dimensional massive MIMO systems, called multi-layer precoding. We developed a specific design for multi-layer precoding that efficiently manages different kinds of interference, leveraging the large channel characteristics. Using analytical derivations and numerical simulations, we showed that multi-layer precoding can guarantee asymptotically optimal performance for the cell-interior users under the one-ring channel models and for all the users under single-path channels. For the cell-edge users, we proposed a modified multi-layer precoding design that compromises between desired signal power maximization and inter-cell interference avoidance. Results indicated that multi-layer precoding can achieve close performance, in terms of rate and coverage, to the single-user case. Further, results showed that multi-layer precoding achieves clear gains over conventional massive MIMO precoding techniques. For future work, it would be interesting to investigate and optimize the implementation of multi-layer precoding using hybrid analog/digital architectures. It is also important to develop techniques for the channel training and estimation under hybrid architecture hardware constraints.

\appendices 
\section{}\label{app:Ach_Rate}
\begin{proof}[Proof of Lemma \ref{lem:Ach_Rate}]
	To prove the achievable rate in \eqref{eq:Ach_Rate_1}, it is sufficient to prove that the power normalization factor $\left[\bUpsilon\right]_{k,k}$ that satisfies the multi-layer precoding power constraint $\left\|\left[\bF_c^{(1)} \bF_c^{(2)} \bF_c^{(3)}\right]_{:,k}\right\|^2=1$ is given by $\left[\bUpsilon\right]_{k,k}=\sqrt{\left(\left( \bW^*_c {\bF_{c}^{(2)}}^* {\bF_{c}^{(2)}} {\bF_{c}^{(2)}}^* {\bF_{c}^{(2)}}  \bW_c \right)_{k,k}^{-1}\right)^{-1}}$. Using this values of $\left[\bUpsilon\right]_{k,k}$, the multi-layer precoding power constraint can be written as
	\begin{align}
	\left\|\left[\bF_c^{(1)} \bF_c^{(2)} \bF_c^{(3)}\right]_{:,k}\right\|^2 & = \left[\bUpsilon\right]_{:,k}^* \bF_c^{(3)} \bF_c^{(2)} \bF_c^{(1)}\bF_c^{(1)} \bF_c^{(2)} \bF_c^{(3)} \left[\bUpsilon\right]_{:,k} \\
	&\stackrel{(a)}{=} \left[\bUpsilon\right]_{:,k}^* \bF_c^{(3)} \bF_c^{(2)}  \bF_c^{(2)} \bF_c^{(3)} \left[\bUpsilon\right]_{:,k} \\
	&\stackrel{}{=} \left[\bUpsilon\right]_{:,k}^* \left(\overline{\bH}_c^* \overline{\bH}_c\right)^{-1} \overline{\bH}_c^* \bF_c^{(2)}  \bF_c^{(2)} \overline{\bH}_c \left(\overline{\bH}_c^* \overline{\bH}_c\right)^{-1} \left[\bUpsilon\right]_{:,k} \\
	&\stackrel{(b)}{=} \left[\bUpsilon\right]_{:,k}^* \left(\bW_c^* {\bF_c^{(2)}}^* \bF_c^{(2)} {\bF_c^{(2)}}^* \bF_c^{(2)} \bW_c \right)^{-1} \bW_c^* {\bF_c^{(2)}}^* \bF_c^{(2)} \bF_c^{(2)}  \bF_c^{(2)} {\bF_c^{(2)}}^* \bF_c^{(2)} \bW_c \nonumber \\
	& \hspace{10pt} \times \left(\bW_c^*{\bF_c^{(2)}}^* \bF_c^{(2)} {\bF_c^{(2)}}^* \bF_c^{(2)} \bW_c\right)^{-1} \left[\bUpsilon\right]_{:,k} \\	
	&\stackrel{}{=} \left[\bUpsilon\right]_{:,k}^* \left(\bW_c^* {\bF_c^{(2)}}^* \bF_c^{(2)} {\bF_c^{(2)}}^* \bF_c^{(2)} \bW_c \right)^{-1} \bW_c^* {\bF_c^{(2)}}^*  \bF_c^{(2)} {\bF_c^{(2)}}^*   \bF_c^{(2)}  \bW_c \nonumber \\
	& \hspace{10pt} \times \left(\bW_c {\bF_c^{(2)}}^* \bF_c^{(2)} \bW_c^*\right)^{-1} \bW_c^*   {\bF_c^{(2)}}^*      \bF_c^{(2)} {\bF_c^{(2)}}^* \bF_c^{(2)} \bW_c \nonumber \\
	& \hspace{10pt} \times \left(\bW_c^*{\bF_c^{(2)}}^* \bF_c^{(2)} {\bF_c^{(2)}}^* \bF_c^{(2)} \bW_c\right)^{-1} \left[\bUpsilon\right]_{:,k} \\	
	&\stackrel{}{=} \left[\bUpsilon\right]_{:,k}^*  \left(\bW_c {\bF_c^{(2)}}^* \bF_c^{(2)} \bW_c^*\right)^{-1}  \left[\bUpsilon\right]_{:,k} \\
	& \stackrel{(c)}{=} 1,	
	\end{align}
where (a) follows by noting that $\bF_c^{(1)}$ has a semi-unitary structure. The effective channel matrix $\overline{\bH}_c=[\overline{h}_{c 1}, ..., \overline{h}_{c K}]$ with $\overline{h}_{c k}=\left[\bG_{c,(k,1)}, ..., \bG_{c,(k,K)}\right]^* \overline{\bw}_{c c k}, k=1, ..., K$ can also be written as $\overline{\bH}_c={\bF_c^{(2)}}^* \bF_c^{(2)} \bW_c$ with $\bW_c=\bI_{K} \circ \left[\overline{\bw}_{c c 1}, ..., \overline{\bw}_{c c K}\right]$, which leads to (b). Finally, (c) follows by substituting with $\left[\bUpsilon\right]_{k,k}=\sqrt{\left(\left( \bW^*_c {\bF_{c}^{(2)}}^* {\bF_{c}^{(2)}} {\bF_{c}^{(2)}}^* {\bF_{c}^{(2)}}  \bW_c \right)_{k,k}^{-1}\right)^{-1}}$. 	
\end{proof}

\section{}\label{app:CellCenter}
\begin{proof}[Proof of Theorem \ref{th:CellCenter}]
Considering the system and channel models in \sref{sec:Model} and applying the multi-layer precoding algorithm in \sref{sec:Algorithm}, the achievable rate by user $k$ at cell $c$ is given by Lemma \ref{lem:Ach_Rate}
\begin{equation}
R_{ck}=\log_2\left(1+ \frac{\mathsf{SNR}}{\left( \bW^*_c {\bF_{c}^{(2)}}^* {\bF_{c}^{(2)}} {\bF_{c}^{(2)}}^* {\bF_{c}^{(2)}}  \bW_c \right)_{k,k}^{-1}}\right).
\end{equation}	

\noindent If $\bG_{c,(k,m)}=\boldsymbol{0}, \forall m\neq k$ and $\bG_{c,(k,k)}=\bI$, then by noting that the matrix $\bW_c$ has a block diagonal structure and using the matrix inversion lemma \cite{Zhang2006}, we get $\left( \bW^*_c {\bF_{c}^{(2)}}^* {\bF_{c}^{(2)}} {\bF_{c}^{(2)}}^* {\bF_{c}^{(2)}}  \bW_c \right)_{k,k}^{-1}=\left\|\overline{\bw}_{c c k}\right\|^{-2}$. Therefore, to complete the proof, it is sufficient to prove that (i) $\lim_{N_\rm{V}, N_\rm{H}\rightarrow \infty} \bG_{c,(k,m)}=\boldsymbol{0}, \forall m\neq k$ and (ii) $ \lim_{N_\rm{V}, N_\rm{H}\rightarrow \infty} \bG_{c,(k,k)}=\bI$. To do that, we will first present the following useful lemma, which is a modified version of Lemma 3 in \cite{Yin2013}.
\begin{lemma} \label{lem:NullRange}
	Consider a user $k$ at cell $c$ with an azimuth angle $\phi_{c k}$. Adopt the one-ring channel model in \eqref{eq:A_Cov} with an azimuth angular spread $\Delta_\rm{A}$ and correlation matrix $\bR_{c c k}^\rm{A}$. Define the unit-norm azimuth array response vector associated with an azimuth angle $\phi_m$ and elevation angle $\theta_m$ as $\bu_m=\frac{\ba(\phi_m, \theta_m)}{\sqrt{(N_\rm{H})}}$, where $\ba(\phi_m,\theta_m)=\left[1, ..., e^{j k D (N_\rm{H}-1) \sin(\theta_x) \sin(\phi_x)}\right]$. 
	If the angle $\phi_x \notin \left[\phi_{c k}-\Delta_\rm{A}, \phi_{c k}+\Delta_\rm{A}\right]$, then 
\begin{equation}
\bu_x \in \text{Null}\left(\bR_{c c k}^\rm{A}\right), \hspace{10pt} \text{as} \hspace{10pt} N_\rm{H} \rightarrow \infty.
\end{equation}		
\end{lemma}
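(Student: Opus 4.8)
The plan is to use the positive semidefiniteness of $\bR_{c c k}^\rm{A}$ to turn the subspace membership into the decay of a single scalar, and then to evaluate that scalar with a Dirichlet-kernel estimate. First I would rewrite the azimuth covariance in integral form. Reading \eqref{eq:A_Cov} as an entrywise identity, $\bR_{c c k}^\rm{A}$ is the average over the angular spread of the rank-one projectors onto the steering vectors sweeping the ring,
\begin{equation}
\bR_{c c k}^\rm{A}=\frac{1}{2\Delta_\rm{A}}\int_{-\Delta_\rm{A}}^{\Delta_\rm{A}}\ba(\phi_{c k}+\alpha,\theta_{c k})\,\ba(\phi_{c k}+\alpha,\theta_{c k})^{*}\,d\alpha,
\end{equation}
which is manifestly Hermitian and positive semidefinite. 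Consequently $\bu_x\in\text{Null}(\bR_{c c k}^\rm{A})$ is equivalent to the vanishing of the quadratic form $\bu_x^{*}\bR_{c c k}^\rm{A}\bu_x=\|{\bR_{c c k}^\rm{A}}^{1/2}\bu_x\|^{2}$, so the asymptotic claim reduces to showing $\bu_x^{*}\bR_{c c k}^\rm{A}\bu_x\to 0$ as $N_\rm{H}\to\infty$; since the dominant (nonzero) eigenvalues of the one-ring covariance stay bounded away from $0$, this decay certifies that $\bu_x$ becomes orthogonal to the range of $\bR_{c c k}^\rm{A}$, i.e.\ asymptotically lies in its null space.

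Next I would insert the integral form and recognize a geometric sum. Substituting gives
\begin{equation}
\bu_x^{*}\bR_{c c k}^\rm{A}\bu_x=\frac{1}{2\Delta_\rm{A}}\int_{-\Delta_\rm{A}}^{\Delta_\rm{A}}\left|\bu_x^{*}\ba(\phi_{c k}+\alpha,\theta_{c k})\right|^{2}d\alpha,
\end{equation}
and each inner product is a sum of $N_\rm{H}$ unit phasors along the horizontal axis, so its magnitude equals the normalized Dirichlet kernel $\frac{1}{\sqrt{N_\rm{H}}}\left|\sin(N_\rm{H}\psi(\alpha)/2)/\sin(\psi(\alpha)/2)\right|$, with $\psi(\alpha)=kD\left[\sin\theta_{c k}\sin(\phi_{c k}+\alpha)-\sin\theta_x\sin\phi_x\right]$ the difference of the spatial frequencies presented to the array.

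The decisive step, which I expect to be the main obstacle, is a uniform lower bound $|\sin(\psi(\alpha)/2)|\ge\delta>0$ over the whole interval $\alpha\in[-\Delta_\rm{A},\Delta_\rm{A}]$, with $\delta$ independent of $N_\rm{H}$. For antenna spacing at most half a wavelength ($kD\le\pi$) the map from a physical angle to its spatial frequency $\sin\theta\sin\phi$ is injective on the visible region and free of aliasing; combined with the monotonicity of $\alpha\mapsto\sin(\phi_{c k}+\alpha)$ over the narrow spread, the hypothesis $\phi_x\notin[\phi_{c k}-\Delta_\rm{A},\phi_{c k}+\Delta_\rm{A}]$ forces $\sin\theta_x\sin\phi_x\neq\sin\theta_{c k}\sin(\phi_{c k}+\alpha)$ for every admissible $\alpha$, so $\psi(\alpha)$ stays away from all multiples of $2\pi$. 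Because $\psi$ is continuous on a compact interval and never meets a zero of $\sin(\,\cdot\,/2)$, such a $\delta$ exists. This is where the geometry must be handled with care: I must verify both that no spatial aliasing reintroduces a coincidence of frequencies and that the strict separation is uniform in $\alpha$ rather than merely pointwise.

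With the bound in hand the conclusion is routine: the integrand is at most $\frac{1}{N_\rm{H}\delta^{2}}$, hence $\bu_x^{*}\bR_{c c k}^\rm{A}\bu_x\le\frac{1}{N_\rm{H}\delta^{2}}\to 0$, and therefore $\bu_x\in\text{Null}(\bR_{c c k}^\rm{A})$ asymptotically. The overall structure mirrors Lemma~3 of \cite{Yin2013}; the only additional bookkeeping is the elevation factor $\sin\theta$ that multiplies the azimuth term inside the spatial frequency, which is what ties this azimuth-domain statement to the Kronecker-structured channel used in the proof of Theorem~\ref{th:CellCenter}.
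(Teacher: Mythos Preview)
Your approach is essentially the same as the paper's: both rewrite $\bR_{cck}^{\rm A}$ as an average of rank-one steering projectors, reduce the null-space claim to the decay of the scalar $\bu_x^{*}\bR_{cck}^{\rm A}\bu_x$, and then control the resulting integral of normalized inner products. The only difference is that the paper dispatches the final step by citing Lemma~1 of \cite{ElAyach2012a}, whereas you unpack that step explicitly via the Dirichlet-kernel bound and the uniform separation $|\sin(\psi(\alpha)/2)|\ge\delta$; your version is therefore more self-contained but not a different route.
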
  
\begin{proof}
	First, note that $\left[\bR_{c c k}^\rm{A}\right]_{n_1,n_2}$ in \eqref{eq:A_Cov}, can also be written as 
	\begin{equation}
		\left[\bR_{c c k}^\rm{A}\right]_{n_1,n_2}= \frac{1}{ 2 \Delta_\rm{A}} \int_{-\Delta_\rm{A}}^{\Delta_\rm{A}} \left[\ba\left(\phi_{ck}+\alpha, \theta_{c k} \right) \ba^*\left(\phi_{ck}+\alpha, \theta_{c k} \right)\right]_{n_1,n_2} d \alpha
	\end{equation}
	Then, we have
	\begin{align}
	\bu_m^* \bR \bu_m &= \frac{1}{ 2 \Delta_\rm{A}  N_\mathrm{H}} \int_{-\Delta_\rm{A}}^{\Delta_\rm{A}} \ba^*(\phi_m,\theta_m) \ba\left(\phi_{ck}+\alpha, \theta_{c k} \right) \ba^*\left(\phi_{ck}+\alpha, \theta_{c k} \right) \ba(\phi_m,\theta_m) d \alpha \\
	& = \frac{1}{ 2 \Delta_\rm{A}} \int_{-\Delta_\rm{A}}^{\Delta_\rm{A}}
	\frac{1}{ N_\mathrm{H}} \left| \ba^*(\phi_m,\theta_m) \ba\left(\phi_{ck}+\alpha, \theta_{c k} \right) \right|^2  d \alpha.
	\end{align}
	\noindent Using Lemma 1 in \cite{ElAyach2012a}, we reach 
	\begin{equation}
	\lim_{{N_H}\rightarrow \infty} \bu_m^* \bR \bu_m = 0, \hspace{10pt} \forall \phi_m \notin \left[\phi_{c k}-\Delta_\rm{A}, \phi_{c k}+\Delta_\rm{A}\right].
	\end{equation}
\end{proof}	

\noindent Now, to prove that $\bG_{c,(k,m)}=\overline{\bU}_{c c k}^* \overline{\bU}_{c c r}={{\bU}^{\rm{A}^*}_{c c k}} {\bU}_{c c r}^\rm{A} \otimes {\overline{\bU}^{\rm{E}^*}_{c c k}} \overline{\bU}_{c c r}^\rm{E}=\boldsymbol{0}$, we need to prove that either ${\bU}^{\rm{A}^*}_{c c k} {\bU}_{c c r}^\rm{A}=\boldsymbol{0}$ or ${\overline{\bU}^{\rm{E}^*}_{c c k}} \overline{\bU}_{c c r}^\rm{E}=\boldsymbol{0}$. If $\left|\phi_{c k}-\phi_{c m}\right| \geq 2 \Delta_\rm{A}$, then the columns of $\bU_{c c m}^\rm{A} \in \text{Span}\left\{\frac{\ba(\phi_m)}{\sqrt{(N_\rm{H})}}\left| \phi_{m} \in \left[\phi_{c m}-\Delta_\rm{A}, \phi_{c m}+\Delta_\rm{A}\right]\right. \right\} \subseteq \text{Null}\left(\bR_{c c k}^\rm{A}\right)$ as $N_\mathrm{H} \rightarrow \infty$, which follows from Lemma \ref{lem:NullRange}. This leads to $\lim_{N_\rm{H} \rightarrow \infty} {\bU_{c c k}^\rm{A}}^* \bU_{c c m}^\rm{A} = \boldsymbol{0}$. Similarly, if $\left|\theta_{c k}-\theta_{c m}\right| \geq 2 \Delta_\rm{E}$, then $\lim_{N_\rm{V} \rightarrow \infty} \bU_{c c k}^{\rm{E}^*} \bU_{c c m}^\rm{E} = \boldsymbol{0}$. Further, since $d \leq d_\mathrm{max}$, we have $\left|\theta_{c k}-\theta_{I}\right| \geq 2 \Delta_\rm{E}$, for any elevation angle $\theta_I$ of another cell user. This implies that $\bU_{c c k}^\rm{E} \in \text{Range} \left\{\bU^\rm{NI}_c\right\}$ as $N_\rm{V} \rightarrow \infty$ by Lemma \ref{lem:NullRange}, and $\exists \bA_{c k}$ such that $\bU_{c c k}^\rm{E}=\bU_c^\rm{NI} \bA_{c k}$. For the $\bU_{c c m}$, it can be generally expressed as $\bU_{c c m}=\bU_c^\rm{NI} \bA_{c m} + \bU_c^\rm{I} \bB_{c m}$ for some matrices $\bA_{c m}, \bB_{c m}$ of proper dimensions. As $\lim_{N_\rm{V} \rightarrow \infty} \bU_{c c k}^{\rm{E}^*} \bU_{c c m}^\rm{E} = \boldsymbol{0}$, we have $\lim_{N_\rm{V} \rightarrow \infty} \bA_{c k}^* \bA_{c m} = \boldsymbol{0}$. Then, $\overline{\bU}_{c c k}^{\rm{E}^*} \overline{\bU}_{c c m}^\rm{E}=\bA_{c k}^* \bA_{c m}=\boldsymbol{0}$ as $N_\mathrm{V} \rightarrow \infty$. This completes the proof of the first condition, $\bG_{c,(k,m)}=\boldsymbol{0}$ if $\left|\phi_{c k}-\phi_{c m}\right| \geq 2 \Delta_\rm{A}$ or $\left|\theta_{c k}-\theta_{c m}\right| \geq 2 \Delta_\rm{E}$, $\forall m \neq k$.

To prove that $\lim_{N_\rm{V}, N_\rm{H}\rightarrow \infty} \bG_{c,(k,k)}=\bI$, we need to show that $\overline{\bU}_{c c k}^{\rm{E}^*} \overline{\bU}_{c c k}^\rm{E}=\bI$. Since $\bU_{c c k}^\rm{E}$ can be written as $\bU_{c c k}^\rm{E}=\bU_c^\rm{NI} \bA_{c k}$ when $N_\rm{V} \rightarrow \infty$, then we have $\bU_{c c k}^{\rm{E}^*} \bU_{c c k}^\rm{E}= \bA_{c k}^*  \bA_{c k} = \bI$. This results in $\overline{\bU}_{c c k}^{\rm{E}^*} \overline{\bU}_{c c k}^\rm{E}=\bA_{c k}^*  \bA_{c k} = \bI$ as $N_\rm{V} \rightarrow \infty$, which completes the proof. 
\end{proof}

\section{}\label{app:CellEdge}
\begin{proof}[Proof of Theorem \ref{th:CellEdge}]
Similar to the proof of Theorem \ref{th:CellCenter}, if $\left|\phi_{c k}-\phi_{c m}\right| \geq 2 \Delta_\rm{A}$ or $\left|\theta_{c k}-\theta_{c m}\right| \geq 2 \Delta_\rm{E}$, $\forall m \neq k$, then $\lim_{N_\rm{V}, N_\rm{H} \rightarrow \infty} \bG_{c,(k,m)}=\boldsymbol{0}$. Using the matrix inversion lemma	and leveraging the block diagonal structure of $\bW_c$, we get $\left( \bW^*_c {\bF_{c}^{(2)}}^* {\bF_{c}^{(2)}} {\bF_{c}^{(2)}}^* {\bF_{c}^{(2)}}  \bW_c \right)_{k,k}^{-1} = \left(\overline{\bw}^*_{c c k} \bG_{c (k,k)} \overline{\bw}_{c c k}\right)^{-1}$. Note that since $d>d_\rm{max}$, $\bU_{c c k}$ is not guaranteed to be in $\text{Range}\left(\bUN_c\right)$, and $\overline{\bU}_{c c k}^{\rm{E}^*} \overline{\bU}_{c c k}^\rm{E} \neq \bI$ in general. The achievable rate of user $k$ at cell $c$ can therefore be written as 
\begin{align}
\lim_{N_\rm{V}, N_\rm{H} \rightarrow \infty} R_{c k} &=\log_2\left(1+ \mathsf{SNR} \  \overline{\bw}^*_{c c k} \bG_{c (k,k)} \overline{\bw}_{c c k} \right) \\
& \stackrel{(a)}{\geq} \log_2\left(1+ \mathsf{SNR} \  \left|\overline{\bw}_{c c k} \right\|^2 \sigma^2_\rm{min}\left(\bI \otimes \overline{\bU}_{c c k}^\rm{E}\right) \right) \\
& \stackrel{(b)}{=} \log_2\left(1+ \mathsf{SNR} \  \left|\overline{\bw}_{c c k} \right\|^2 \sigma^2_\rm{min}\left(\overline{\bU}_{c c k}^\rm{E}\right) \right),
\end{align}
where (a) follows by applying the Rayleigh-Ritz theorem \cite{Lutkepohl1997}, and (b) results from the properties of the Kronecker product.
\end{proof}

\linespread{1.25}

\end{document}